\newcommand{\lcn}{allocation }
\newcommand{\crsl}{correspondingly }
\newcommand{\dst}{distribution }
\newcommand{\dsts}{distributions }
\newcommand{\flw}{following }
\newcommand{\fls}{follows}
\newcommand{\lgt}{algorithm}
\newcommand{\clcs}{calculations}
\newcommand{\crsp}{corresponding}
\newcommand{\ndp}{independent}
\newcommand{\ndpc}{independence }
\newcommand{\mpt}{important }
\newcommand{\imd}{immediately }
\newcommand{\prb}{probability }
\newcommand{\prbe}{probability}
\newcommand{\prbs}{probabilities }
\newcommand{\stg}{strategy }
\newcommand{\stge}{strategy}
\newcommand{\stgs}{strategies }
\begin{document}

\section{Introduction. General and Symmetrical LBT Models. Main Results}\label{INT}
\footnotetext{This paper was written with active participation of Konstantin I. Sonin, (Univ. of Chicago Harris School of Public Policy) but due some difference of opinion on the structure of the paper, he withdrew his signature.\par
We thank Michael Grabchak, Ernst Presman, Fedor Sandomirsky and Alexander Slastnikov, for their valuable remarks, helpful discussions, and patience with reading numerous drafts.}
The problem of allocating limited resources between different tasks is a classical problem in many areas of Operations Research, Economics, Finance, and Engineering. This problem, in the context of players (participants), is an important field in Game Theory. In a classical Colonel Blotto game, two players distribute limited resources between different sites (battlefields) with the goal of winning more sites, where you win a site if you have more resources on this site than your opponent. There is substantial literature on this topic, starting with the classic paper (Shubik and Weber, 1981) \cite{shub81} and more recent publications, such as (Robertson, 2006) \cite{rob06}, where a complete solution of the ``continuous" version was given, as well as (Hart, 2015) \cite{Har15}, where some interesting extensions are discussed. In a comprehensive and detailed survey (Hohzaki, 2016) \cite{hoh16} dedicated to Search Games, the Blotto game is classified as an attack-defense game. Another survey is (Hausken and Levitin, 2012) \cite{HaLe12}. There are even more papers dedicated to these games and, as in all of Operations Research, all classifications have many overlapping parts. As an example of other important papers on attack-defense games we mention (Golany et all, 2012) \cite{Rot12} and (Powell, 2007) \cite{pow07}, and recent paper (Clarkson et all, 2023) \cite{Cla23} on Search Game in discrete locations. \par
The inspiration for the model in this paper and some basic ideas can be traced to the paper by (K. Sonin and A. Wright, 2017, 2023) \cite{sw19}, where they provided a model of intelligence gathering in combat and used highly detailed data about Afghan rebel attacks, insurgent-led spy networks, and counterinsurgency operations. This theoretical model was a novel version of the Colonel Blotto game. In their model, first, the government allocates its scarce defense resources across possible targets. Then, each target is independently
tested by the rebels for vulnerability. Finally, the rebels base their choice of targets on the results of these tests. Empirically, the paper demonstrated a robust link between local economic conditions and the patterns of rebel attacks. \par
A more general and abstract mathematical model called the Locks, Bombs and Testing (LBT) model was described in
(I. Sonin, 2019) \cite{son19} and \cite{son22}. The solution for the other important case was obtained in the PhD thesis of Liu Li in 2019. This thesis, in a modified form, is a substantial part of the paper (Li and Sonin, 2021) \cite{lison21}. \par
The mathematical tools used in this paper, where all parameters are discrete, are rather elementary, but our model touches the key notions in three important areas: Probability Theory, Statistics, and Game Theory. These are:
in Probability Theory - independence, total probability formula, Bayes' formula; in Statistics - test of hypotheses, errors of type 1 and 2, or in equivalent terms - sensitivity, specificity; in Game Theory - randomization and (Nash, von Neuman) equilibrium points. The model presented here combines all of these ingredients.\par
The main goal of our paper is to give a complete solution to a specific case of a general model, so called
Symmetrical-LBT, (S-LBT) but we prefer to describe first the general model and an approach to its solution and only after to describe the specifics for our case. Thus, in the next subsection we describe the General LBT model (G-LBT) and a scheme to solve it and in the subsection after we show how this scheme works for S-LBT. This material and Section 2 is almost verbatim text from (Sonin, 2022) \cite{son22}.
\subsection{Setup for General LBT }\label{setup}
General LBT (G-LBT) model can be described as \fls. As in most attack-defense games, there are two players with different roles. We call one of them the Defender (DF) and the other, the Attacker (AT). They are ``fighting" over $n$ ``boxes"\ (sites, battlefields, cells, targets, time slots, etc.) with possibly distinct cost values $c_i$ for box $i$. ``Boxes" and ``sites" are used  interchangeably in this paper. AT is trying to destroy these boxes by placing ``bombs"\ that can result in explosions and destruction. One or more bombs can be placed into the same box. AT has $m$ identical bombs to allocate among $n$ boxes. The bombs are imperfect. We denote by $p$ the probability of explosion of any one bomb and $p(u)$ the probability of explosion of at least one of $u$ bombs placed into the same box. A box is destroyed if at least one explosion occurs, and the explosions of different bombs in the same site or in different sites are independent.   \par
DF is trying to protect the boxes by distributing among them identical ``locks". A lock is a protection device which, when placed in a box, prevents its destruction with any number of bombs in it. We assume that locks are perfect but this assumption can be weakened in a more general model. Obviously, ``locks" and ``bombs" are just the names for discrete units of resources of protection and destruction. The number of locks $k$, $k<n$, can be fixed, in which case it is an $A\equiv A(n,k,m)$ problem, the subject of paper \cite{son22}, and this paper, or it can be a random variable obtained when a lock appears in site $i$ with probability $\lambda_i$ independently of other boxes, in which case we call this a
$B\equiv B(n,\lambda, m)$ problem. The latter assumption can represent either the uncertainty of DF about the number of locks,
or the uncertainty of AT about how many locks will be distributed. With $m=1$ we will use simpler notation $A(n,k)$ and $B(n,\lambda)$. When all sites are identical we call this model a symmetrical LBT model. The symmetrical $B(n,\lambda)$ model was solved in \cite{lison21}. \par
The crucial feature of both models, in contrast to classical Blotto games, is that AT can and will \emph{test} every box, trying to find boxes without locks. The result of testing is a vector of signals $s=(s_1,...,s_n)$, where each $s_i=0,1$, is known only to AT. Hereafter we refer to this vector as \emph{signal} $s$.
The following notation is used throughout the paper. Define random variables $T_{i},S_{i},D_{i},$ each taking two values $0$ and $1$: $T_{i}=1$ when the $i$th box is protected, and $0$ otherwise; $S_{i}=1$ when the $i$th box test is positive (plus), and $0$ otherwise (minus), and $D_{i}=1$ when the $i$th box is destroyed. For simplicity of notation, the absence of subindex $i$ means that the formula applies to any box.\par
The testing is not perfect: a test of site $i$ may have a positive result ($S_i=1$) even if there is no lock at the site ($T_i=0$), and negative ($S_i=0$) even if there is a lock ($T_i=1$). The probabilities of correct identification of both types, in statistical language the \emph{sensitivity} and \emph{specificity}, $P(S_i=1|T_i=1)=a_i$ and $P(S_i=0|T_i=0)=b_i$, the values (costs) of all boxes $c_i, i=1,...,n$, the number of locks $k$, and the number of bombs $m$, are known to both players. Our assumptions imply the following basic equalities:
	\begin{eqnarray}
		 P(S_i=1|T_i=1, S_{-i}=s_{-i})&=&a_i, \ \ \ \ P(S_i=0|T_i=0, S_{-i}=s_{-i})=b_i, \notag \\
		 P(D_{i} =1|T_{i}=1)&=&0,\ \ \ \ \ \ P(D_{i} = 1|T_{i}=0,u_{})=p(u_{}),\label{be}
	\end{eqnarray}
where $S_{-i}$ is vector $S=(S_1,...,S_n)$ without coordinate $S_i$, the similar notation is used for vector $(s_{-i})$, $u=u_{i}$ is the number of bombs in box $i$, and $p(u) $ is the \emph{explosion function}, the probability of at least one explosion in a box with $u$ bombs.\par
If success is independent across bombs, then the function $p(u)=1-\left(1-p\right) ^{u}$. Thus, the function $p(u)$ is increasing and upward concave, and the function $\Delta p(u)\equiv p(u+1)-p(u)$ is decreasing. The diminishing effect of each extra bomb, i.e., the diminishing of its marginal utility, plays an important role in determining the optimal strategy. Most of our results can be extended to a more general explosion function $p(u)$ so long as it is positive, increasing, and upward concave. 
\par
 \begin{remark}[]\label{Rem1}
   Note that there is a fundamental difference between models $A\equiv A(n,k,m)$ and $B\equiv B(n,\lambda,m)$. In model
$A$ the signals $S_1,...,S_{n}$ are not independent, though the result of testing in box $i$ depends only on whether a lock is present or absent in this box. This property is reflected in the first line of formula (\ref{be}). In model $B$ the signals $S_1,...,S_{n}$ are independent.
 \end{remark}
If e.g. in model $A(3,1)$ testing in boxes 1 and 2 is very accurate and indicates that they do not have a lock, one knows that a lock must be in box 3. In model $B(3, \lambda)$ the results of testing in boxes 1 and 2 give no information about the presence of a lock in box 3. Only its testing gives some information. Thus in model $B$ boxes are informationally independent. This makes model $B$ in some regards simpler than $A$ although both have different specific features. In this paper we consider only case  $A(n,k,m)$ though from time to time we will make short remarks about case $B$.
\subsection{Goals and Strategies}\label{goals}
An \mpt parameter of the G-LBT model is the cost vector $c=(c_1,...,c_n)$, where WLOG we assume that
$c_1\geq c_2\geq...\geq c_n=1$. The goal of AT is to maximize the \emph{total expected value} of destroyed boxes, and the goal of DF is to minimize it. Thus, we obtain a zero-sum game. Note that if players have different goals, e.g., if they have different cost vectors, or if AT is trying to destroy at least some fixed number of sites, and the goal of DF is still to minimize the total expected value of destroyed boxes, then we would obtain a non-zero-sum game.  \par     
When the number of available locks is $k, 1\leq k\leq n$, a \emph{strategy} of DF is a probability distribution $\pi_{}$, (prior $\pi)$, on a set $\Gamma_{n,k}\equiv \Gamma$ of all possible locations of $k$ locks.  We assume that set $\Gamma$ is an \emph{indexed} set of $n$-dim vectors $\gamma =(r_{1},r_{2},...,r_{n})$, where each $r_i$ equals zero or one, $\sum_{i=1}^nr_i=k$ and $r_i=1$ means that box $i$ has a lock.  For simplicity of notation, we write $i\in\gamma$ to mean that $r_i\equiv r_i(\gamma)=1$ and $i\notin\gamma$ to mean that $r_i\equiv r_i(\gamma)=0$. Note that $|\Gamma_{n,k}|= \binom{n}{k}=:M_{n,k}\equiv M$. Thus \stg $\pi_{}$ is an $M$-dim vector $(\pi(\gamma), \gamma \in \Gamma)$. We also will use notation $\pi =(\pi_{1},...,\pi_{M})$. Let $G$ be a $M\times n$ matrix, where all $\gamma\in\Gamma$ serve as rows. Thus, entries with ones (zeros) in matrix $G$ show where locks are present (absent). All vectors, written in the text are \emph{row vectors}, in all matrix calculations they are \emph{column vectors}. Vector with coordinates $q_i$ will sometimes be denoted as $(q_i)$. \par
In our setting we assume that both DF and AT know all parameters of the model. As we will see later, as usual in game theory, whether the prior distribution of locks $\pi$, selected by DF, is known to AT, or not, does not matter. DF does not know the results of testing by AT, i.e. the signal $s$.\par
After the locks are allocated using prior $\pi$, AT tests all boxes, receives signal $s=(s_1,...,s_n)$, and then, using $\pi$ and the probabilities of signals $s$ given a fixed position of locks $\gamma$, $p(s|\gamma)$, calculates the \emph{posterior distribution} of the positions of locks $\theta(\gamma|s,\pi)$.
After that, for each signal $s$ and each $m$, and knowing $\pi$, AT solves the problem of optimal allocation of $m$ bombs trying to maximize the total expected value of destroyed sites. Such allocation can be deterministic or use a randomization when there is an indifference in the choice of boxes. An \emph{optimal strategy of AT} $\varphi_{opt}(\pi)$, with respect to the strategy of DF $\pi$, \emph{is a collection} of her optimal responses
$\varphi_{opt}(s|\pi)=(u_1(s|\pi),...,u_n(s|\pi))$ to each signal $s$, where $u_i(s|\pi)$ is the number of bombs placed into site $i, i=1,...,n$, $\sum_iu_i(s)=m$. The prior distribution $\pi$ together with this optimal strategy $\varphi_{opt}(\pi)$ results in the corresponding \emph{total expected damage} to DF, $v(\pi)=v(\pi, \varphi_{opt}(\pi))$. \par
DF does not know signals $s$ but can reproduce all calculations of AT and thus knows $\varphi_{opt}(s|\pi)$ and $v(\pi)$. The goal of DF is to select a prior distribution of locks $\pi_*$, which minimizes this damage. Then the pair
$(\pi_*, \varphi_*)$, $\varphi_*=\varphi_{opt}(\pi_*)$, forms a classical Nash equilibrium (NE) point. The corresponding game value is denoted $v_*=v(\pi_*, \varphi_*)$. Although $\pi_*$, as a rule, is not unique,
the minimization of $v(\pi)\equiv v(\pi, \varphi_{opt}(\pi))$ over all $\pi$ gives a specific game value $v_*$.\par
The main drawback of a general G-LBT model is that the volume of calculations is growing exponentially because e.g. the number of possible signals is $2^n$. More realistic models with polynomial volume of calculations is Symmetrical (S-LBT) model, the main goal of this paper.
\subsection{Symmetrical LBT}\label{sym}
When, in the general model described above, all sites are symmetrical, i.e., all costs are the same, $c_i=1$, and all parameters of testing are the same, $a_i=a, b_i=b$, (in Problem $B$ $\lambda_i=\lambda$ for all $i$), we call this game the \emph{symmetrical} LBT (S-LBT) game (model). The additional justification to limit the consideration to the symmetrical case in this paper and in \cite{lison21} is the following. The general LBT game and the straightforward approach to solving it, described above, have two basic drawbacks. First, the set of possible positions for locks, i.e., the set of subsets of an $n$ element set, generally have order $4^n/\sqrt n$, and the set of potential signals has size $2^n$. As a result, the calculations of posterior distributions $\theta(\gamma|s,\pi)$ and their marginal distributions $\alpha_i(s|\pi)=P(T_i=0|s, \pi)$, which play a crucial role in the description of optimal strategies, become cumbersome for large $n$. The second problem is that the knowledge of detailed information about the values of $k, c_i, a_i$, and $b_i$ in many practical situations is unrealistic.\par
The symmetry of sites and testing implies that the optimal \stg of DF is simple: to allocate all $k$ locks at random between $n$ sites. Intuitively, this statement is clear and it can be proved formally that deviation from this allocation will reduce the optimal payoff in equilibrium point. A reader can assume that this is just as an assumption in this model. Thus, the game is reduced to a problem of maximization for AT, given uniform prior distribution of locks, $\pi=\pi_*$. Since in S-LBT $\pi_*$ is known, in many formulas we skip further reference to $\pi_*$.\par
In statistical physics, the distribution of $k$ available objects (locks) between $n$ boxes uniformly, with no more than one object in a box, is called the Fermi-Dirac statistics: any combination of $k$ protected boxes has the same probability $1/\binom{n}{k}$. The symmetry implies that the probability of protection, i.e. the presence of a lock for each individual box, $P(T_i=1)=\frac{k}{n}$. We formally prove these claims in Lemma \ref{Lem1}. \par
The main goal of our paper is to present the complete solution of the $A(n,k)$ S-LBT game, but we also provide, for comparison, some details  about the completely solved model $B\equiv B(n,\lambda)$ from {\cite{lison21}. \par
\vspace{.1cm}          
We will see in the general scheme of solution of G-LBT, described in the next Section, that an \mpt role is played by the terms $\alpha_i(s |\pi)=P(T_i=0|s,\pi)$, the \prb of absence of a lock in box $i$, given signal $s$ and prior $\pi$. The key to the full solution of S-LBT is the fact that now $\alpha_i(s |\pi_*)\equiv P(T_i=0|s, \pi_*)=P(T_{i}=0|s_{i}, N=x)$, where $N$ is the number of minuses (zeros) in signal $s$. In other words, the exact positions of boxes with zeros do not matter, only their number. This leads to one of our main results about S-LBT - the optimal strategy of AT $\varphi_{opt}(\cdot |s,m)$, given signal $s$, depends only on the probability of an explosion $p$, the value $x$ of rv  $N$, and the following ratio $r_A(x)\equiv r_{n,k}(x)$.
\begin{eqnarray}
r_A(x)=\frac{P(T_i=0|S_i=0,N=x)}{P(T_i=0|S_i=1,N=x)}\equiv\frac{p^{-}(x)}{p^{+}(x)},\ \ 0<x<n, \label{rxr}
\end{eqnarray}
where $p^{-}(x)=P(T=0|S=0,N=x), p^{+}(x)=P(T=0|S=1,N=x)$. More than that, in Lemma \ref{Lem2} we will obtain the explicit formulas for $p^{-}(x)$ and $p^{+}(x)$ through parameters $n,k,b$ and distributions $g_{n-1,k}$ and $g_{n,k}$ - distributions of rv $N$ in models $A(n-1,k)$ and $A(n,k)$.\par
In Problem $B$, a similar role is played by ratio $r_B=\frac{P(T_i=0|S_i=0)}{P(T_i=0|S_i=1)}$. Note that this ratio does not depend on $x$. Corresponding formula is given in Subsection 6.2.
An interesting and even counterintuitive property, proved later, is that in problem $A$, the function $r_A(x)$ and
therefore the optimal \stg and the value function, depend on $a$ and $b$ only through the value
$c=\frac{a}{1-a}\frac{b}{1-b}, 0\leq c <\infty$, a combined characteristic of the quality of testing. It means in particular that WLOG we can assume that $a=b=\frac{\sqrt{c}}{1+\sqrt{c}}$. In problem $B$ this property is not true with respect to the value $r_B$. When the parameters of sensitivity $a$ and specificity $b$ are ``informative" (``truthful"),
i.e.  $a\geq \frac{1}{2}, b\geq \frac{1}{2}, a+b>1$, these ratios are more than one. This immediately implies that if the number of bombs does not exceed the number of minuses (zeros) $N=x$ in signal $s$, then each bomb goes to some  minus box, and this allocation is arbitrary with respect to the positions of minus boxes.
When, given signal $s$ with $N=x, 0\leq x\leq n$, the number of bombs $m$ exceeds $x$, an optimal strategy of AT $\varphi_{opt}(\pi_*)$ is again arbitrary with respect to the positions of minus boxes, and can be expressed through $r_A(x)$, and other parameters, as follows. \par
Initially, all bombs are placed one by one into each of $x$ minus boxes until the threshold level $d=d_A(x)$ in
Problem $A$ (level $d=d_B$ in Problem $B$) is reached in each of them or the bombs are exhausted. Afterwards, the bombs are added one by one to plus boxes until there is a bomb in each of them. Then, bombs are added one by one into minus boxes until each of these boxes has $d+1$ bombs in each of them, then back to plus boxes until each has at least $2$ bombs, etc. This \textquotedblleft fill and switch"\ process stops when AT runs out of bombs. We will call such a strategy a $d$\emph{-uniform as possible} strategy (hereafter, a ``$d$-UAP strategy"). If $x=0$ or $n$, then all boxes are simply filled sequentially, and this is a $0$-UAP strategy. The value of the threshold $d_A(x)$, given that $x$ minuses were observed, represents the ``advantage level"\ of a minus box over a plus box. A similar interpretation can be given to the threshold $d_B$. The thresholds $d_A(x)$ will be described in Theorem 1. The values $N=x$ and $r_A(x)$ ($r_B(\lambda)$) play the role of sufficient statistics in the optimization problem. \par
In an example with $n=5$, $x=3$, $m=12$, and $d(x)=3$, the dynamics of $3$-UAP strategy described above implies that finally  $3$ bombs are placed into each of $2$ minus boxes, $4$ into the third minus box, and one bomb into each of the $2$ plus boxes. If $d(x)=2$, than with the same $n,x,m$ each of the minus boxes has 3 bombs, $1$ plus box has $1$, and the second plus box has 2.\par
Before proceeding we mention that the G-LBT model is very flexible and can be easily extended in many directions. We discuss several possible interpretations and extensions in Section \ref{GEN}.
The structure of our paper is as follows. In Section \ref{INT} we briefly described the main results of our paper. In Section \ref{STG} we describe the General Scheme to solve G-LBT. In Section \ref{PREM} this scheme is transformed and adjusted to S-LBT model. In Section \ref{OSV} we formulate and prove Theorem \ref{th:1}. In Section \ref{GEN} we discuss some open problems and Section \ref{APN} - Appendix contains some technical proofs from statements in previous Sections. 
\section{Three Stages to Solve G-LBT. General Scheme  }\label{STG} 
The distribution of locks and subsequent testing may be viewed as a two-stage random experiment with outcomes represented by pairs $(\gamma ,s)$, where $\gamma$ and $s$ are $n$-dimensional vectors. The probability of each outcome $(\gamma ,s)$ is $p(\gamma ,s)=\pi(\gamma )p(s|\gamma )$, where $\pi$ is the prior distribution of locks, and
$p(s|\gamma )=P(S_{1}=s_{1},...,S_{n}=s_{n}|\gamma )$. The explicit formula for $p(s|\gamma)$ is given in Proposition \ref{Prop1}.\par
The solution of the G-LBT game, i.e., the description of all NE points, can be divided conventionally into three stages.
\par
\vspace{.1cm}
\textbf{Stage 1. Prior Distribution, Testing, and Posterior Distributions}\par
\vspace{.1cm}
Suppose that DF selected prior \dst $\pi=(\pi(\gamma))$, i.e. $M$-dim \prb vector
$\pi=(\pi_1,...,\pi_M)$. If $\pi$ is concentrated on a \emph{particular} vector $\gamma=(r_{1},r_{2},...,r_{n})$, then testing of AT generates random vector, signal $s=(s_1,...,s_n)$, with formula for the \prb of this vector
$p(s|\gamma)=\prod_{i=1}^{n}P(S_i=s_i|T_i=r_i)$, defined by vectors of sensitivity and specificity
$a=(a_1,...,a_n)$ and $b=(b_1,...,b_n)$. We denote $\Sigma$ to be the \emph{indexed} set $(s(1),s(2),...,s(N))$ of all signals $s$, $|\Sigma|=2^n=N$. We consider $p(s|\gamma)$ as the entries of the $M\times N$ stochastic matrix $P$.\par
Given $\pi$, one can calculate $N$-dim vector $(p(s|\pi))$, with entries $p(s|\pi)=\sum_\gamma\pi(\gamma)p(s|\gamma)$ for each signal $s$, and then, using Bayes' formula, the $N\times M$ stochastic matrix of \emph{posterior \dsts of locks} $\Theta(\pi)$ with entries $\theta(\gamma|s,\pi)$. The formula for the entries of this matrix is given in Proposition \ref{Prop1}. \par 
Given $\pi$, one can obtain $n$-dim vector $\alpha(\pi)$ with entries
$\alpha_i(\pi)=P(T_i=0|\pi)$ - the \prb of \emph{absence}  of a lock in box $i$. Thus $\alpha_i(\pi)=\sum_{\gamma\in \Gamma}\pi(\gamma)P(T_i=0|\gamma)$, where $P(T_i=0|\gamma)$ equals $1$, if $r_i=0$ and $0$ if $r_i=1$ in vector $\gamma=(r_{1},r_{2},...,r_{n})$. Given vector $\gamma$, we also use the shorthand notation $i\in \gamma$ if $r_i=1$, and
$i\notin \gamma$ if $r_i=0$. Then, given $\pi$, we can equivalently express,
$\alpha_i(\pi)=\sum_{\gamma: i\notin \gamma}\pi(\gamma)$. Note, that, given $k$ locks, for any $\pi$ we have $\sum_{i=1}^n\alpha_i(\pi)=n-k$, since the complimentary sum of \prbs of \emph{presence} of $k$ locks in $n$ boxes is $k$.\par
Similarly, given prior $\pi$ and signals $s$, using posterior \dst matrix $\Theta(\pi)=\{\theta(\gamma|s,\pi)\}$, one can obtain an $N\times n$ matrix $A(\pi)=\{\alpha_i(s,\pi)\}$, where $\alpha_i(s,\pi)=P(T_i=0|s,\pi)$, the \prb of absence of a lock in site $i$, given prior $\pi$ and signal $s$. \par
\vspace{.1cm}
\begin{remark}\label{Rem2}
At Stage 1, the number of bombs $m$, the explosion function $p(u)$, and cost vector $c$ do not participate. In a sense this stage may be considered as a statistical problem where a Statistician (AT), given signal $s$, deals with the problem of recognizing the pattern of allocation of $k$ objects among $n$ sites, assuming knowledge of a prior \dst of objects placed by Nature (DF).
\end{remark}
\vspace{.1cm}
\textbf{Stage 2. Given $\pi$, selected by DF, both AT and DF calculate an optimal \stg of AT $\varphi_{opt}(\pi)$ and
the expected loss (damage) of DF $v(\pi)\equiv v(\pi,\varphi_{opt}(\pi))$}.\par
\vspace{.2cm}
Once matrix $A(\pi)$ is obtained, to find the optimal response of AT given signal $s$, with $m=1$ one has to compare the expected \emph{potential} loss $l_i(s,\pi)=pc_iP(T_i=0|s,\pi)\equiv pc_i\alpha_i(s,\pi)$, in site $i$ if a bomb is placed there, over all $i$. These entries form an  $N\times n$ matrix of\emph{ potential losses} $L(\pi)$ with rows ($l_i(s,\pi))$.
If in a row defined by signal $s$ there is only one maximal element then AT should place her bomb there. If the maximum is achieved in $t$ boxes, we assume that AT selects one of them at random. Formally, this optimal choice can be represented by a $N\times n$ matrix $\Phi_{opt}(\pi)=\{u_{i}(s,\pi)\}$ with $u_{i}(s,\pi)=1$, if entry
$l_i(s,\pi)$ is the only maximal element in row $s$ and equal to $1/t$ if the maximum is achieved in $t$ entries. All other entries $u_{i}(s,\pi)=0$. This matrix describes the \emph{optimal \stg }$\varphi_{opt}(\pi)$ of AT, with respect to the \stg of DF $\pi$. If $m>1$, for each signal $s$, bombs can be placed in a few boxes, and with more than one bomb in some boxes. Then matrix $\Phi_{opt}(\pi)$ will represent the number of bombs in each box, but the optimal allocation will be defined by a more complex procedure taking into account explosion function $p(u)$. \par
After matrix $L(\pi)$ is calculated and hence $\varphi_{opt}(\pi)$ is known, one can obtain sequentially: $N\times n$ matrix of\emph{ optimal} (for AT) damages to DF, $D_{opt}(\pi)=\{d_i(s,\pi,\varphi_{opt}(\pi))\}$, where
$d_i(s,\pi,\varphi_{opt}(\pi))=l_i(s,\pi)u_{i}(s |\pi)$, the expected  damage to DF in site $i, i=1,...,n$ when AT applies her optimal \stg given signal $s$;  $n$-dimensional vector $d(\pi)\equiv d_{opt}(\pi,\varphi_{opt}(\pi))$ with coordinates
$d_i(\pi)\equiv d_i(\pi,\varphi_{opt}(\pi))$, the (optimal) expected optimal damage to site $i$, averaged over all signals, i.e. $d_i(\pi)=\sum_{s\in \Sigma}p(s|\pi)d_i(s,\pi,\varphi_{opt}(\pi))$, and, finally, (optimal) \emph{total expected damage} to DF $v(\pi)\equiv v(\pi,\varphi_{opt}(\pi))=\sum_{i=1}^nd_i(\pi)$. \par
\vspace{.1cm}
Both DF and AT can produce all of these calculations and learn an optimal \stg of AT for each signal $s$, although \emph{only AT knows the signals}. Thus, they both know matrix $D_{opt}(\pi)$, vector $d(\pi)$ and value $v(\pi)$ for each $\pi$. Before discussing Stage 3, where DF selects $\pi_*$, minimizing this value, we make some comments about the first two stages. \par
\vspace{.1cm}
On the theoretical side, the first two stages follow rather standard applications of the total probability formula and Bayes' theorem. However, in practical terms, to find an optimal response for each signal is a challenging problem, especially since the number of signals grows exponentially. The complete solution of the symmetrical LBT is partly possible because, in this
case, DF uses a uniform prior, and all signals $s=(s_1,...,s_n)$ with the same number $x$ of $s_i=0$, i.e., the number of negative (zeroes) results of testing, $x=n-\sum_is_i$, contain the same amount of information. \par
\vspace{.1cm} 
\textbf{Stage 3. DF, knowing $v(\pi,\varphi_{opt}(\pi))$ for each $\pi$, finds all $\pi_*$ that minimize this quantity.} \par
\vspace{.1cm}
Each $\pi_*$, minimizing expected loss of DF, generates Nash equilibrium (NE) point $(\pi_*,\varphi_{opt}(\pi_*))$. Indeed,
DF found the \emph{minmax} solution and we have a matrix zero-sum game with unique game value, hence $(\pi_*,\varphi_*)$ is NE point. Thus for each player it is not profitable to change her strategy if the other player does not change hers. All such NE points have the same \emph{game value} $v_*=min_\pi v(\pi_*,\varphi_{opt}(\pi_*))$. Note that the optimal response of AT for each signal $s$ in Stage 2 is based not on the posterior \dst matrix $\Theta(\pi)$ itself, but on the matrix of marginal posterior \prb $A(\pi)$.
As a rule, there are many prior $\pi$ that can produce the same $A(\pi)$, and therefore they will produce the same \stg of AT $\varphi_{opt}(\pi)$, but we also can not exclude the possibility of priors $\pi_1$ and $\pi_2$, that produce different $A_j(\pi)$, and hence potentially different $\varphi_{opt}(\pi_j), j=1,2$, but both provide the same minimal value $v_*=v(\pi_1,\varphi_{opt}(\pi_1))=v(\pi_2,\varphi_{opt}(\pi_2))$. Such situation does not occur in three specific models solved in paper \cite{son22}. In the general case the existence of multiple $A(\pi)$ is an open problem. This is not important for the zero-sum game in this paper but it is important for non-zero-sum modifications of G-LBT model.  \par
\vspace{.1cm}
Stage 3 is the most difficult, (``da liegt der Hund begraben"). Paper \cite{son22} among other results gives the full solution of a G-LBT game with $n=2,k=1$, cost vector $(c_0,1), c_0\geq 1$ and general informative testing parameters $a,b: 1/2\leq a,b \leq 1 $. This solution shows how nontrivial Stages 2 and 3 are. This solution also shows that a potential hypothesis that in a NE point all sites have the same expected
losses, is, generally, not true. Strangely enough it is true for some interval of values of the parameter $c_0$. Stage 3 for $n>2$ probably does not have a universal solution. There is no explicit solution even for problem $A(3,1)$, where prior $\pi$ is only two dimensional, but every $\varphi_{opt}(\pi)$ depends additionally on eight $3$-dimensional signals $s$.
The hope is to develop an \lgt \ of recursive approximations.
A possible general approach for Stage 3 in described in \cite{son22}. \par
\vspace{.1cm}
To summarize a few statements given above, we present Proposition \ref{Prop1}, which holds for a general model - G-LBT.
The description of matrix $P$ with rows $p(s|\gamma)$, is given in point a), where the position of locks $\gamma$ is fixed, and thus does not depend on a prior \dst of locks. Then, given prior $\pi$, we can obtain vector $p(s|\pi)=P(S_1=s_1,...,S_n=s_n|\pi)$. We remind that the definition of parameters $a_i$ and $b_i$ is given in formula (\ref{be}), and $i\in \gamma$ means that $r_i(\gamma)=1$, and $i\notin \gamma$, means that $r_i(\gamma)=0$. \par
\vspace{.1cm} 
\begin{proposition}[G-LBT model]\label{Prop1}
  a) Given $\gamma \in \Gamma$ and signal $s=(s_1,...,s_n)$, $p(s|\gamma)$ is given by the formula 
\begin{eqnarray}
p(s|\gamma)=\prod_{i=1}^{n}p(s_i|\gamma)=\prod_{i\in \gamma}a_i^{s_i}(1-a_i)^{1-s_i}\prod_{i\notin \gamma}b_i^{1-s_i}(1-b_i)^{s_i}. \label{psg}
\end{eqnarray}
b) Given prior $\pi$ and signal $s$, the coordinates of vector $p(s|\pi)$ are given by the formula
\begin{eqnarray}
p(s|\pi)=\sum_{\gamma\in \Gamma}\pi(\gamma)p(s|\gamma).       \label{pspi}
\end{eqnarray}
c) Given prior $\pi$ and signal $s$, for any $\gamma$, posterior \prb $\theta(\gamma|s,\pi)$ is given by the formula
\begin{eqnarray}
\theta(\gamma|s,\pi)=\frac{P(\gamma,s|\pi)}{p(s|\pi_{})}=\frac{\pi_{}(\gamma)p(s|\gamma)}{p(s|\pi_{})}\equiv\frac{\pi(\gamma)p(s|\gamma)}
{ \sum_{\sigma\in \Gamma}\pi(\sigma)p(s|\sigma)}. \label{pgs}
\end{eqnarray}
d) Given prior $\pi_{}$  and signal $s$, the entries of matrix $A(\pi)=\{\alpha_i(s,\pi)\}$ are given by formula
\begin{eqnarray}
\alpha_i(s,\pi)=\sum_{\gamma: i\notin \gamma}\theta(\gamma|s,\pi),\ \ \ \  \  \sum_{i=1}^n\alpha_i(s,\pi)=n-k. \label{alp}
\end{eqnarray}
 \end{proposition}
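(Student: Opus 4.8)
The plan is to derive the four formulas in the order stated, each from a single standard tool of elementary probability: conditional independence of the tests given a fixed lock position for (a), the total probability formula for (b), Bayes' rule for (c), and marginalization for (d). Throughout, the underlying probability space is the two-stage experiment producing pairs $(\gamma ,s)$ with $P(\gamma ,s)=\pi(\gamma)p(s|\gamma)$ introduced at the start of Section~\ref{STG}, and one records that each $T_i$ is a deterministic function of $\gamma$, namely $T_i=r_i(\gamma)$, so that $\{T_i=0\}=\{i\notin\gamma\}$ and $\{T_i=1\}=\{i\in\gamma\}$.

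For part (a), I would first show that, conditionally on a fixed position of locks $\gamma$, the signals $S_1,\dots,S_n$ are independent. The first line of (\ref{be}) asserts that $P(S_i=1\mid T_i=1,S_{-i}=s_{-i})=a_i$ and $P(S_i=0\mid T_i=0,S_{-i}=s_{-i})=b_i$ for every value $s_{-i}$; since $\gamma$ fixes all of $T_1,\dots,T_n$, this says precisely that the conditional law of $S_i$ given $\gamma$ and given $S_{-i}$ does not depend on $S_{-i}$, i.e. $P(S_i=s_i\mid\gamma,S_{-i}=s_{-i})=p(s_i|\gamma)$ with $p(s_i|\gamma)=a_i^{s_i}(1-a_i)^{1-s_i}$ when $i\in\gamma$ and $p(s_i|\gamma)=b_i^{1-s_i}(1-b_i)^{s_i}$ when $i\notin\gamma$. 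Averaging this identity over the coordinates $S_{i+1},\dots,S_n$ shows it still holds with $S_{-i}$ replaced by the prefix $S_1,\dots,S_{i-1}$; the chain rule $p(s|\gamma)=\prod_{i=1}^{n}P(S_i=s_i\mid\gamma,S_1=s_1,\dots,S_{i-1}=s_{i-1})$ then collapses to $\prod_{i=1}^{n}p(s_i|\gamma)$, and splitting the product according to whether $i\in\gamma$ or $i\notin\gamma$ yields the displayed formula (\ref{psg}).

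Parts (b)--(d) are then immediate. Part (b) is the total probability formula for the partition of the sample space by the value of $\gamma$, whose prior probability is $\pi(\gamma)$: $p(s|\pi)=\sum_{\gamma}P(\gamma)P(s\mid\gamma)=\sum_{\gamma}\pi(\gamma)p(s|\gamma)$. Part (c) is Bayes' rule, $\theta(\gamma|s,\pi)=P(\gamma\mid s)=P(\gamma,s)/P(s)=\pi(\gamma)p(s|\gamma)/p(s|\pi)$, using (b) for the denominator. For part (d), conditioning on $s$ and marginalizing over $\gamma$ gives $\alpha_i(s,\pi)=P(T_i=0\mid s)=\sum_{\gamma}P(T_i=0\mid\gamma)\theta(\gamma|s,\pi)=\sum_{\gamma:\,i\notin\gamma}\theta(\gamma|s,\pi)$, since $P(T_i=0\mid\gamma)$ is $1$ if $i\notin\gamma$ and $0$ otherwise; summing over $i$ and exchanging the order of summation, $\sum_{i=1}^{n}\alpha_i(s,\pi)=\sum_{\gamma}\theta(\gamma|s,\pi)\,|\{i:i\notin\gamma\}|=(n-k)\sum_{\gamma}\theta(\gamma|s,\pi)=n-k$, because every $\gamma\in\Gamma_{n,k}$ has exactly $n-k$ zero coordinates and $\theta(\cdot|s,\pi)$ is a probability distribution.

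The only step requiring genuine care --- the ``main obstacle,'' such as it is --- is the passage in part (a) from the hypothesis in (\ref{be}), which conditions on the whole complementary vector $S_{-i}$, to the conditional independence of $S_1,\dots,S_n$ given $\gamma$ that the chain rule needs; once this averaging argument is spelled out, the remainder is routine bookkeeping, and the same computation goes through verbatim in the symmetrical case where $\pi=\pi_*$ is uniform.
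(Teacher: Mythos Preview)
Your proof is correct and follows essentially the same route as the paper: conditional independence plus the Bernoulli pmf for (a), total probability for (b), Bayes for (c), and a swap of summation order using $|\{i:i\notin\gamma\}|=n-k$ for the second identity in (d). The only cosmetic differences are that you spell out how the conditional independence in (a) is extracted from hypothesis~(\ref{be}) via averaging and the chain rule (the paper simply asserts it), and in (d) you exchange sums directly at the level of the posterior $\theta(\cdot|s,\pi)$, whereas the paper first multiplies through by $p(s|\pi)$ and works with $\pi(\gamma)p(s|\gamma)$ before dividing back; the arithmetic is identical.
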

\begin{proof} a) Given $\gamma$, the signals in different sites are \ndp, and hence
$p(s|\gamma)=\prod_{i=1}^{n}p(s_i|\gamma)=\prod_{i\in \gamma}p(s_i|\gamma)\prod_{i\notin \gamma}p(s_i|\gamma)$.
Signal $S_i$ in $i$-th box is a Bernoulli rv with parameter $p=a_i$ if $i\in \gamma$, i.e. $r_i(\gamma)=1$, and $p=1-b_i$ if $i\notin \gamma$, i.e. $r_i(\gamma)=0$. The pmf of a Bernoulli rv has the form $p(x)=p^x(1-p)^{1-x}, x=0,1$, and then
$\prod_{i\in \gamma}p(s_i|\gamma)=\prod_{i\in \gamma}a_i^{s_i}(1-a_i)^{1-s_i}$. Similarly,
$\prod_{i\notin \gamma}p(s_i|\gamma) =\prod_{i\notin \gamma}b_i^{1-s_i}(1-b_i)^{s_i}$. Thus, we obtain formula (\ref{psg}).
Note, that matrix $P=\{p(s|\gamma), s\in \Sigma, \gamma\in \Gamma\}$ is a stochastic one since the sum of entries in every $\gamma$ row is one. \par
Formula (\ref{pspi}) in b) is an application of Total Probability formula.
The equality in formula (\ref{pgs}) represents Bayes' formula. For each signal $s$ we also have obvious equality $\sum_{\gamma}\theta(\gamma|s,\pi)=1$. The first equality in (\ref{alp}) is an analog of the
equality $\alpha_i(\pi)=\sum_{\gamma: i\notin \gamma}\pi(\gamma)$ presented in the description of Stage 2, with
prior $\pi$ replaced by posterior \prb $\theta(\gamma|s,\pi)$. The second equality is similar to the
equality $\sum_{i=1}^n\alpha_i(\pi)=n-k$, also mentioned earlier.
The formal proof is based on the equality\par
\vspace{.05cm}
$p(s|\pi)\sum_{i=1}^n\alpha_i(s,\pi)=\sum_{i=1}^n\sum_{\gamma: i\notin \gamma}\pi(\gamma)p(s|\gamma)=\sum_{i=1}^n\sum_{\gamma}(1-r_i(\gamma))\pi(\gamma)p(s|\gamma)$. \par
Using the equalities
$\sum_{\gamma}\pi(\gamma)p(s|\gamma)=p(s|\pi)$, and $\sum_{i=1}^nr_i(\gamma)=k$, we obtain that $\sum_{i=1}^n\alpha_i(s,\pi)=n-k$.
\end{proof}
\section{Preliminary Formulas and Auxiliary Results for S-LBT}\label{PREM} 	
In this section we transform the general formulas for matrices and vectors from G-LBT to specific formulas for S-LBT where DF uses \emph{uniform} $\pi_*$, the testing parameters and cost values are the same for all sites, ($c_i=1$ for all $i$). They will lend themselves to simple and easily programmable computational formulas. We remind that $N$ is the number of minuses (zeros) in signal $s$. 
\subsection{Lemma 3.3}\label{Lemma1} 
In statistical physics, the distribution of $k$ available objects between $n$ boxes uniformly, with no more than one object in a box, is called the Fermi-Dirac statistics: any combination of $k$ filled boxes has the same probability $1/\binom{n}{k}$. The symmetry of this \dst implies also the following:\par
\vspace{.2cm}
\begin{proposition}[] \label{Prop4}   a) Given prior $\pi_*$, 
\begin{eqnarray}
P(T_i=1)=k/n, \ \ \ \ \ P(T_i=0)=(n-k)/n. \label{pti}
\end{eqnarray}
b) Given prior $\pi_*$, for any signal $s=(s_1,...,s_n)$ and any $x=0,1,...,n,$
\begin{eqnarray}
P(S=s|N=x)=1/\binom{n}{x}, \ \ \ \ \ P(S=s,N=x)=P(N=x)/\binom{n}{x}. \label{psx}
\end{eqnarray}
\end{proposition}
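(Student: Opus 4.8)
The plan is to exploit the permutation symmetry of the uniform prior $\pi_*$ together with the fact that the testing parameters are site-independent ($a_i=a$, $b_i=b$), so that the joint law of $(T_1,\dots,T_n,S_1,\dots,S_n)$ is invariant under any permutation of the site labels. For part a), I would compute $P(T_i=1)=\sum_{\gamma:i\in\gamma}\pi_*(\gamma)$ directly: since $\pi_*(\gamma)=1/\binom{n}{k}$ for every $\gamma\in\Gamma_{n,k}$ and the number of configurations $\gamma$ with $r_i(\gamma)=1$ is $\binom{n-1}{k-1}$ (choose the remaining $k-1$ locks among the other $n-1$ boxes), we get $P(T_i=1)=\binom{n-1}{k-1}/\binom{n}{k}=k/n$, and $P(T_i=0)=1-k/n=(n-k)/n$. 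Alternatively, by label symmetry all $P(T_i=1)$ are equal, and $\sum_i P(T_i=1)=k$ forces each to be $k/n$; I would likely present the symmetry argument since it is the one that generalizes.

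For part b), the key observation is that for a fixed signal value $s$ with exactly $x$ zeros, $P(S=s)$ depends on $s$ only through $x$. Indeed, by the total probability formula and Proposition \ref{Prop1}(a),(b),
\begin{eqnarray}
P(S=s)=\sum_{\gamma\in\Gamma}\pi_*(\gamma)\,p(s|\gamma)=\frac{1}{\binom{n}{k}}\sum_{\gamma\in\Gamma}\prod_{i\in\gamma}a^{s_i}(1-a)^{1-s_i}\prod_{i\notin\gamma}b^{1-s_i}(1-b)^{s_i},\notag
\end{eqnarray}
and applying any permutation $\sigma$ of $\{1,\dots,n\}$ to the summation index $\gamma$ just permutes the coordinates of $s$; since $\Gamma_{n,k}$ is permutation-invariant, $P(S=s)=P(S=\sigma s)$ for all $\sigma$. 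Hence $P(S=s)$ is constant over the $\binom{n}{x}$ signals with $N=x$ zeros. Therefore $P(N=x)=\binom{n}{x}\,P(S=s)$ for any such $s$, which gives $P(S=s\mid N=x)=P(S=s)/P(N=x)=1/\binom{n}{x}$ (noting that $\{S=s\}\subseteq\{N=x\}$ when $s$ has $x$ zeros, so the conditional probability is just the ratio), and the second identity $P(S=s,N=x)=P(S=s)=P(N=x)/\binom{n}{x}$ follows immediately.

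I do not anticipate a serious obstacle here; this is essentially a bookkeeping argument built on symmetry. The one point requiring a little care is making the permutation-invariance claim precise: one must check that the map $\gamma\mapsto\sigma\gamma$ is a bijection of $\Gamma_{n,k}$ onto itself and that $p(\sigma s\mid\sigma\gamma)=p(s\mid\gamma)$ (which is transparent from the product formula (\ref{psg}) once all $a_i$ and $b_i$ are equal), so that relabeling the sum over $\gamma$ converts $P(S=s)$ into $P(S=\sigma s)$ without changing its value. Once that is in place, everything else is a one-line consequence of $|\{s:N(s)=x\}|=\binom{n}{x}$ and the definition of conditional probability.
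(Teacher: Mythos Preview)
Your proposal is correct and follows essentially the same approach as the paper. For part a) you give exactly the paper's counting argument $\binom{n-1}{k-1}/\binom{n}{k}=k/n$, and for part b) both you and the paper argue that symmetry of the uniform prior together with site-independent testing forces all signals with the same number of zeros to be equiprobable; your version simply makes the permutation-invariance step more explicit than the paper's one-sentence appeal to ``symmetry of allocation of locks and symmetry of testing.''
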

\begin{proof} a) The number of combinations of $k$ locks having one lock on a fixed position and the other $k-1$ locks having any of remaining $n-1$ positions is $\binom{n-1}{k-1}$. Then, given prior $\pi_*$,  $P(T_i=1)=\binom{n-1}{k-1}/\binom{n}{k}=\frac{k}{n}$. \par
\vspace{.1cm}
b)  This point is an intuitively appealing observation, that, given prior $\pi_*$, signals distribution is uniform conditional on the number $x$ of minus signals. Indeed, by symmetry of allocation of locks and symmetry of testing, all signals with the same number of zeros (minuses) $N=x$ have the same \prbe, and since the number of such signals is $\binom{n}{x}$, then we obtain two equivalent equalities in formula (\ref{psx}). 
\end{proof}
The key to the drastic reduction in calculations in S-LBT is a transition from \prb space
$\Omega=\{(\gamma,s)\}\equiv \Gamma_k\times \Sigma$ of size $\binom{n}{k}\times 2^n$ to a partition of this space into sets $G(t,x)$ with $0\leq x\leq n,$ $0\leq t\leq min(x,k)$, using the \flw construction. \par
We denote rv $N(s)$ the number of zeros in signal $s$. To describe its distribution, let us introduce rvs
$N_1(\gamma,s)$ and $N_2(\gamma,s)$: $N_1$ is the number of minuses in locked boxes, i.e. the number of \emph{false minuses}, or equivalently, the number of locks in boxes with minuses; $N_2$ is the number of minuses in unlocked boxes, i.e. the number of \emph{correct minuses}. Then the total number of minuses after testing $N=N_1+N_2$. The rv $N_1$ is a binomial rv with $k$ trials and probability of success $1-a$, the rv $N_2$ is a binomial rv with $n-k$ trials and probability of success $b$. These two random variables are independent, and unless $b=1-a$, rv $N=N_1+N_2$, taking values $0,1,...,n$, is not a binomial rv. Sometimes the distribution of $N$ is called the Poisson binomial distribution. The signal $s=(s_1,...,s_n)$ and the value $N=x$ are \emph{observable} to AT in contrast to the values of $N_1$ and $N_2$, which are not. To stress this distinction, sometimes we use the notation $N(s)$ and $N_1(\gamma,s)$, because AT knows only $s$ and $N$, and only a potential ``observer" who knows \emph{both} $\gamma$ and $s$ knows the values of $N_1$, and $N_2$. This also explains why the \dst $P$ on the set of outcomes $\Omega=\{(\gamma,s)\}\equiv \Gamma_k\times \Sigma$ generally depends on prior $\pi$, but ``marginal" \dsts \ of rvs $N_1,N_2$ and $N$, depend only on parameters $n,k,a,b$. Of course the source of this contrast is the symmetry of testing despite the potential asymmetry of prior $\pi$. \par
We denote by $p_i(j)$ the pmf (probability mass function) of the binomial rvs $N_i, i=1,2$ and $p(j|r,p), j=0,1,...,r$, the pmf of a binomial distribution with $r$ trials and probability of success $p$.
 Thus $p_1(j)=p(j|k,1-a)$ and $p_2(j)=p(j|n-k,b)$.  Then the pmf of rv $N$ in problem $A(n,k)$, $g_{A}(x)\equiv g_{n,k}(x), 0\leq x \leq n$, can be calculated by standard \emph{discrete convolution formula}, given below.
\begin{eqnarray}
			P(N=x)\equiv g_{n,k}(x)=\sum_{j}p_1(j)p_2(x-j)\equiv\sum_{t}p_1(x-t)p_2(t). \label{con}
\end{eqnarray}
In convolution and similar formulas we omit the exact range of summation assuming that all probabilities involved in the sums are well defined. 
The \flw \dst will also play an important role in our analysis:  the joint \dst of $N_1$ and $N$
\begin{eqnarray}
s(t,x)\equiv P(N_1=t,N=x)=P(N_1=t,N_2=x-t)=p_1(t)p_2(x-t).   \label{stx}
\end{eqnarray}
\vspace{.1cm}
Let us denote
$G(t,x)=\{(\gamma,s): N_1(\gamma,s)=t, N(s)=x\}$, $0\leq x\leq n,$ $0\leq t\leq min(x,k)$. We have the \flw \par
\vspace{.1cm}
\begin{proposition}[]\label{Prop5} 
Sets $G(t,x)$ form a partition of set $\Omega=\Gamma\times \Sigma$, and for prior $\pi_{}$,
\begin{eqnarray}
P((\gamma,s)\in G(t,x))=s(t,x), \ \ \ \ \ \ \sum_{0\leq t\leq x}s(t,x)= P(N=x)\equiv g_{n,k}(x).   \label{gtx}
\end{eqnarray}
\end{proposition}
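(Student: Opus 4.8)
The plan is to verify the partition property by a direct combinatorial observation, then to compute the probabilities $P((\gamma,s)\in G(t,x))$ by conditioning on the lock configuration $\gamma$ and invoking the symmetry of testing encoded in Proposition~\ref{Prop1}a).

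First I would note that each outcome $(\gamma,s)\in\Omega$ determines a unique pair of integers: $N(s)=x$, the number of zeros in $s$, and $N_1(\gamma,s)=t$, the number of locks sitting in boxes whose signal is a zero. Hence $(\gamma,s)$ belongs to exactly one set $G(t,x)$, so the $G(t,x)$ are pairwise disjoint with union $\Omega$. For the index ranges, $0\le N_1\le N$ forces $t\le x$ and $N_1\le k$ forces $t\le k$, so $0\le t\le\min(x,k)$; when $x-t>n-k$ the set $G(t,x)$ is empty, which is harmless since then $p_2(x-t)=0$ and $s(t,x)=p_1(t)p_2(x-t)=0$ as well, so such indices may be kept or discarded at will.

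Next I would compute $P((\gamma,s)\in G(t,x))$ by conditioning on a fixed $\gamma\in\Gamma$. By Proposition~\ref{Prop1}a) with $a_i=a$, $b_i=b$, the signals in distinct boxes are independent given $\gamma$; in particular, within the $k$ locked boxes each signal is a zero independently with probability $1-a$, and within the $n-k$ unlocked boxes each signal is a zero independently with probability $b$. Consequently, conditionally on $\gamma$, the count $N_1$ of zeros among locked boxes is $\mathrm{Bin}(k,1-a)$, the count $N_2$ of zeros among unlocked boxes is $\mathrm{Bin}(n-k,b)$, these two counts are independent, and crucially \emph{neither their joint law nor its parameters depend on which boxes are locked} --- only on the number $k$ of locks. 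Since $N=N_1+N_2$, this gives $P(N_1=t,\,N=x\mid\gamma)=P(N_1=t,\,N_2=x-t\mid\gamma)=p_1(t)\,p_2(x-t)=s(t,x)$, a value independent of $\gamma$. Averaging over the prior, $P((\gamma,s)\in G(t,x))=\sum_{\gamma\in\Gamma}\pi(\gamma)\,s(t,x)=s(t,x)$, which is the first equality in (\ref{gtx}); the argument works for an arbitrary prior $\pi$ since every $\gamma$ in its support has exactly $k$ locks. The second equality in (\ref{gtx}) then follows either directly from the convolution formula (\ref{con}), $\sum_{0\le t\le x}s(t,x)=\sum_t p_1(t)p_2(x-t)=g_{n,k}(x)=P(N=x)$, or, using the partition just established, by summing the probabilities of the disjoint sets $G(t,x)$, $t=0,\dots,x$, whose union is $\{(\gamma,s):N(s)=x\}$.

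The only point requiring genuine care --- and the conceptual core of the statement --- is the claim that the conditional law of $(N_1,N_2)$ given $\gamma$ is the same for all $\gamma$; this is precisely where the symmetry $a_i\equiv a$, $b_i\equiv b$ enters, and it is what makes the marginal distribution of $N$ (and the finer joint statistics $s(t,x)$) independent of the prior even though the law $P$ on $\Omega$ itself depends on $\pi$. Everything else is routine bookkeeping of binomial pmf's and the total probability formula.
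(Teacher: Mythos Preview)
Your proof is correct and follows essentially the same approach as the paper. The paper's own proof is very brief---just invoking the definition of $G(t,x)$ and formula (\ref{stx})---because the surrounding text has already established that $N_1$ and $N_2$ are independent binomials whose laws depend only on $n,k,a,b$; you have simply made explicit the key point (the conditional law of $(N_1,N_2)$ given $\gamma$ is the same for every $\gamma$ with $k$ locks) that the paper states informally in the paragraph preceding the proposition.
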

\vspace{.1cm}
\begin{proof} We will refer to the $M\times N$ stochastic matrix $P=\{p(s|\gamma)\}$ as Basic Table $P$, since it gives a visual representation of \prb space $\Omega$. Given any prior $\pi$, we obtain a \prb \dst on this space $p(\gamma,s)=\pi(\gamma)p(s|\gamma)$. By definition of set $G(t,x)$ and formula (\ref{stx}) we obtain the statement about partition and the equalities in (\ref{gtx}) for any given prior.
 \end{proof}
\vspace{-.1cm}
Proposition \ref{Prop1} for G-LBT model is transformed into the \flw lemma for S-LBT model. We remind that in symmetrical case an optimal \stg of DF, $\pi_*$ is a uniform \dst on set $\Gamma$, i.e. $\pi_*(\gamma)=1/\binom{n}{k}$. \par
\begin{lemma}\label{Lem1}
 a) For all $\gamma \in \Gamma$ and all signals $s=(s_1,...,s_n)$, such that $(\gamma,s)\in G(t,x)$, the \prb $P(S=s|\gamma)$ is the same and given by formula
\begin{eqnarray}
P(S=s|\gamma)=a^{k-t}(1-a)^{t}b^{x-t}(1-b)^{n-k-(x-t)}\equiv p(t,x). \label{ptx}
\end{eqnarray}
b) The values of  $p(t,x)$, $s(t,x)$ and $G(t,x)$ are related as follows
\begin{eqnarray}
\binom{n}{k}s(t,x)=p(t,x)|G(t,x)|,\ \text{where}\  |G(t,x)|=\binom{n}{k}\binom{k}{t}\binom{n-k}{x-t}\equiv\binom{n}{x}\binom{x}{t}\binom{n-x}{k-t}.
\label{btx}
\end{eqnarray}
c) Given prior $\pi_*$,  for all $\gamma \in \Gamma$ and all signals $s=(s_1,...,s_n)$, such that
$(\gamma,s)\in G(t,x)$, the posterior \prb $\theta(\gamma|s,\pi_*)$ is the same, i.e. $\theta(\gamma|s,\pi_*)=\theta(t,x)$  and given by formula
\begin{eqnarray}
\theta(t,x)=p(t,x)/m(x),\ \text{where}\ m(x)=\sum_{(\sigma,s): N(s)=x}p(\sigma|s,\pi_*)=
\sum_{i}p(i,x)\binom{x}{i}\binom{n-x}{k-i}.
\label{ttx}
\end{eqnarray}
d) Given prior $\pi_*$, for any signal $s=(s_1,...,s_n)$ and any $x=0,1,...,n,$
\begin{equation}
P(T_{i}=0|S=s,N=x)=P(T_{i}=0|S_i=s_{i},N=x)\equiv \alpha_i(s_i,x). \label{alx} 
\end{equation}
 \end{lemma}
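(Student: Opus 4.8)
The plan is to treat parts (a)--(c) as routine specializations of Proposition~\ref{Prop1} once $\pi=\pi_*$ is uniform and $a_i\equiv a$, $b_i\equiv b$, and to concentrate the work on (d). For (a), I would count, for $(\gamma,s)\in G(t,x)$, that $k-t$ locked boxes carry a plus, $t$ locked boxes a minus, $x-t$ unlocked boxes a minus and $n-k-(x-t)$ unlocked boxes a plus, then substitute into (\ref{psg}); for (b) I would combine this with (\ref{stx}) and the standard identity $\binom{n}{k}\binom{k}{t}\binom{n-k}{x-t}=\binom{n}{x}\binom{x}{t}\binom{n-x}{k-t}$; and (c) follows from (\ref{pgs}) upon noting that both $\pi_*(\gamma)=1/\binom{n}{k}$ and $p(s|\gamma)=p(t,x)$ depend on $(\gamma,s)$ only through $(t,x)$, with normalizer $p(s|\pi_*)=m(x)/\binom{n}{k}$. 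The real content is (d).

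For (d), I would use an averaging-over-lock-placements argument whose point is that the combinatorics only ``sees'' the number $x$ of minus boxes and whether box $i$ is itself a minus box. First, since a signal $s$ with $N(s)=x$ already determines the event $\{N=x\}$, the left side of (\ref{alx}) equals $\alpha_i(s,\pi_*)=\sum_{\gamma:\,i\notin\gamma}\theta(\gamma|s,\pi_*)$ by (\ref{alp}). By part (c), $\theta(\gamma|s,\pi_*)=\theta(t,x)$ depends on $(\gamma,s)$ only through $t=N_1(\gamma,s)$ and $x$, so I would group the sum by the value of $N_1$:
\[
\alpha_i(s,\pi_*)=\sum_{t}\theta(t,x)\,\bigl|\{\gamma:\ i\notin\gamma,\ N_1(\gamma,s)=t\}\bigr|.
\]
Now I would count that cardinality. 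A placement $\gamma$ in the $t$-th term leaves box $i$ unlocked and puts exactly $t$ of the $k$ locks among the $x$ minus boxes of $s$. If $s_i=0$, box $i$ is one of those $x$ minus boxes, so the count is $\binom{x-1}{t}\binom{n-x}{k-t}$ (choose $t$ locked boxes among the other $x-1$ minus boxes and $k-t$ among the $n-x$ plus boxes); if $s_i=1$, box $i$ is a plus box, so the count is $\binom{x}{t}\binom{n-x-1}{k-t}$. Either way the count, and hence $\alpha_i(s,\pi_*)$, is a function of $x$ and $s_i$ alone, which I will call $\alpha_i(s_i,x)$. Finally the right side of (\ref{alx}), $P(T_i=0|S_i=s_i,N=x)$, is the weighted average over all signals $s'$ with $s'_i=s_i$, $N(s')=x$ of the now-constant numbers $\alpha_i(s',\pi_*)$, so it equals the same value; this proves (\ref{alx}) and, as a byproduct, yields $\alpha_i(0,x)=m(x)^{-1}\sum_t p(t,x)\binom{x-1}{t}\binom{n-x}{k-t}$ and $\alpha_i(1,x)=m(x)^{-1}\sum_t p(t,x)\binom{x}{t}\binom{n-x-1}{k-t}$, the form Lemma~\ref{Lem2} will refine.

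The argument is short; the only care needed is bookkeeping --- $x$ with $P(N=x)=0$ is handled vacuously, and $m(x)>0$ (equivalently $g_{n,k}(x)>0$) whenever the conditioning event has positive probability, both immediate from (\ref{con}). I would also remark that (d) admits a one-line conceptual proof: any permutation of $\{1,\dots,n\}\setminus\{i\}$ preserves the uniform prior and the symmetric testing model, hence preserves $P(T_i=0\mid S=\cdot)$, and two signals with the same $i$-th coordinate and the same number of zeros differ by such a permutation; the explicit count above is preferable only because it simultaneously delivers the displayed formulas. The sole mild ``obstacle'' is organizing the count so that its independence of the positions of the minus boxes is manifest.
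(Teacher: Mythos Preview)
Your treatment of (a)--(c) matches the paper's: both specialize Proposition~\ref{Prop1} to $a_i\equiv a$, $b_i\equiv b$, $\pi=\pi_*$ and do the obvious bookkeeping. The divergence is in (d). The paper's proof (relegated to the Appendix) computes each side of (\ref{alx}) separately via Bayes' formula, bringing in Proposition~\ref{Prop4} (formula~(\ref{psx})) and the distributions $g_{n-1,k}$, $g_{n,k}$: for $s_i=0$ it reduces the left side to $\dfrac{(n-k)/n\cdot b\cdot g_{n-1,k}(x-1)/\binom{n-1}{x-1}}{g_{n,k}(x)/\binom{n}{x}}$ and the right side to $\dfrac{(n-k)/n\cdot b\cdot g_{n-1,k}(x-1)}{g_{n,k}(x)\cdot x/n}$, then matches them using $x\binom{n}{x}=n\binom{n-1}{x-1}$. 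Your route is different and arguably cleaner: you stay inside Lemma~\ref{Lem1}, use (c) to write $\alpha_i(s,\pi_*)=m(x)^{-1}\sum_t p(t,x)\,|\{\gamma:\,i\notin\gamma,\,N_1(\gamma,s)=t\}|$, and observe that the count $\binom{x-1}{t}\binom{n-x}{k-t}$ or $\binom{x}{t}\binom{n-x-1}{k-t}$ depends only on $(s_i,x)$; the final averaging step is then trivial. The paper's calculation has the virtue of delivering $p^{-}(x)$ and $p^{+}(x)$ directly in the form (\ref{PC1}) needed for Lemma~\ref{Lem2}(a),(b), whereas your count gives equivalent but differently packaged expressions (closer in spirit to Lemma~\ref{Lem3}); conversely, your argument is self-contained and avoids invoking the $(n-1)$-box model and the identity from (\ref{psx}). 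Your symmetry remark --- that any permutation of $\{1,\dots,n\}\setminus\{i\}$ fixes the law under $\pi_*$ --- is a valid alternative the paper mentions only informally (``intuitively clear'').
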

\par
Before proving Lemma \ref{Lem1}, we note that in points a) and b) any prior $\pi$ or prior $\pi_*$ is not mentioned at all, but in points c), and d) uniform prior $\pi_*$ plays a crucial role, so, in fact, $P$ in those points is $P_{\pi_*}$ but we keep shorter notation $P$. Thus in points a) and b) the symmetry of testing implies that $P(S=s|\gamma)$ for any $s$ does not depend on a position of $\gamma$ directly, but only through the observable number $N(s)=x$ of zeros and not observable number of false zeros among $k$ locks, i.e. $N_1(\gamma,s)=t$. After the proof we give a simple example to illustrate \clcs \ in formulas in Lemma \ref{Lem1}. 
\begin{proof} a) The pmf of a Bernoulli rv has a form $p(u)=p^u(1-p)^{1-u}, u=0,1$, and then, using formula (\ref{psg}) in point a) of Proposition \ref{Prop1} we have $\prod_{i\in \gamma}P(s_i|\gamma)=(1-a)^ta^{k-t}$, where $t$ is the value of rv $N_1$. Similarly, $\prod_{i\notin \gamma}P(s_i|\gamma)=b^{x-t}(1-b)^{n-k-(x-t)}$, where $x$ is the value of rv $N$. Combining these two products, we \imd obtain the equality in formula (\ref{ptx}). Thus, we obtained that all entries in Basic Table $P$ contain values $p(i,x)$ for all possible values of $i$ and $x$. On intersection of column $s$ with $N(s)=x$ and row $\gamma$ with $N_1(\gamma,s)=i$ we have value $p(i,x)$. The number of such columns is $\binom{n}{x}$ and the number of such rows is $\binom{x}{i}\binom{n-x}{k-i}$. To obtain the second equality we have $i$ choices for $i$ locks among $x$ minuses and
$k-i$ choices for remaining $k-i$ locks among $n-x$ pluses.\par
\vspace{.1cm}
b) First, we calculate $|G(t,x)|$, i.e. the number of pairs $(\gamma,s)$ such that $N_1(\gamma,s)=t, N(s)=x$, where $0\leq t\leq k, t\leq x\leq n$. To calculate this number we apply the
combinatorial Product Principle for three stages: first, we have $\binom{n}{k}$ choices to select $k$ positions for $k$ locks. On the second stage we have $\binom{k}{t}$ choices among these positions to select $t$ positions for $t$ zeros in signal $s$, and on the third stage we have $\binom{n-k}{x-t}$ choices to select positions with $x-t$ remaining zeros among $n-k$ positions without locks. The last equality in (\ref{btx}) is a similar application of Product Principle starting with allocation of $x$ zeros among all $n$ positions in signal $s$, then selecting $t$ positions for locks among these $x$ positions, and finally selecting $k-t$ positions for remaining $k-t$ locks among $n-x$ positions for ones in signal $s$. This last equality was in fact proved in a) for $t=i$, where we calculated the number of columns with $N(s)=x$ and the number of rows with $N_1(\gamma,s)=i$. We also have $\sum_{t,x}|G(t,x)|=|\Omega|=MN=\binom{n}{k}2^n$. \par
To show the first equality in (\ref{btx}) note that by definition of $s(t,x)$ for any arbitrary prior $\pi$ we have
$s(t,x)=\sum_{(\gamma,s)\in G(t,x)}\pi(\gamma)p(s|\gamma)$. By point a) in this sum each $p(s|\gamma)=p(t,x)$. We can select
$\pi=\pi_*$ with $\pi_*(\gamma)=1/\binom{n}{k}$ and then we obtain that $\binom{n}{k}s(t,x)=p(t,x)|G(t,x)|$, i.e.
the first equality in (\ref{btx}). \par
\vspace{.1cm}
c) This point is a reduction of point c) in Proposition \ref{Prop1}.  According to a general formula (\ref{btx}) we have $\theta(\gamma|s,\pi_*)=\frac{\pi_{*}(\gamma)p(s|\gamma)}{p(s|\pi_{*})}$, where
$p(s|\pi_{*}) =\sum_{\sigma}\pi_*(\sigma)p(s|\sigma)$.
We have $\pi_*(\gamma)=\pi_*(\sigma)=1/\binom{n}{k}$ and by point a) for $(\gamma,s)\in G(t,x)$ we have $p(s|\gamma)=p(t,x)$.
Then $\theta(\gamma|s,\pi_*)=p(t,x)/\sum_{\sigma}p(s|\sigma)$, where
in the sum we have a summation only over $(\sigma, s)$ with $N(s)=x$. Again by point a) $p(s|\sigma)=p(i,x)$ for some
$0\leq i\leq min(k,x$). The number of terms with $p(i,x)$ equals $c_i=\binom{x}{i}\binom{n-x}{k-i}$.
Indeed, in our Basic Table $P$, the sum of all these terms over $i$ is a sum over any column under value $x$, since they all have the same sum. The number of terms with $N_1=i, c_i$, given $x$, we can obtain using the
Product Principle for two stages: first, we have $\binom{x}{i}$ choices to select $i$ positions for $i$ locks among
$x$ minuses, i.e. $i$ positions for $i$ false minuses; on the second stage we have $\binom{n-x}{k-i}$ choices to select $k-i$ positions for remaining locks among $n-x$ plus positions. \par
Note that in S-LBT we can avoid completely the calculation of $\theta(\gamma|s,\pi)=\theta(t,x)$ because, as in G-LBT, the optimal \stg of AT is based only on $\alpha_i(s_i|\pi)$, which now has a simplified form, given in point d). Thus, formula (\ref{ttx}) is used mainly to check accuracy of other formulas. \par
d)  The equality (\ref{alx}) is intuitively clear. Indeed, the information about $s_i$, at particular site $i$, say $s_i=0$, and how many other sites have negative results is important, but the exact positions of these other sites are not, since all sites identical. The formal proof although is not quite obvious, and it will be given in the Appendix. In Lemma 2 we present the explicit formulas for $\alpha_i(s_i=0,x)\equiv p^{-}(x)$ and $\alpha_i(s_i=1,x)\equiv p^{+}(x)$ based on the equality in (\ref{alx}). 
\end{proof} 
\vspace{-.2cm}
We give a simple example to illustrate \clcs \ in formulas in Lemma \ref{Lem1}. \par
\vspace{.1cm} 
\begin{example}[\textbf{A(2,1)}]     \label{A21}
Denote two possible positions of a lock: $\gamma_1=(1,0)$ and $\gamma_2=(0,1)$. There are $2^2=4$ possible signals $s$: $s(1)=(+,+)\equiv (1,1), s(2)=(1,0), s(3)=(0,1), s(4)=(0,0)$. Thus $M=2, N=4$. Using formula (\ref{be}) or (\ref{ptx}) we can obtain all $8$ entries of matrix $P=\{p(s|\gamma)\}$. They are given in rows 3, and 4 in Table 1. We have
$p(s(1)|\gamma_1)=P(S_1=1|T_1=1)P(S_2=1|T_2=0)=a(1-b)=e_1.$ All other values of
$p(s(i)|\gamma_1), i=2,3,4$ are: $e_2=ab$, $e_3=(1-a)(1-b), e_4=(1-a)b$.
They are listed in the third row of $5\times 5$ matrix $P$ in Table 1. The values of $p(s(i)|\gamma_2 )$ are listed in the fourth row of matrix in Table 1. They are $e_1, e_3, e_2, e_4$. This Table also reminds us that these are values of $p(t,x)$: $e_1=p(0,0)$, $e_2=p(0,1)$, $e_3=p(1,1)$, $e_4=p(1,2)$.\par
The values of $m(x)$ for $x=0,1,2$ are listed in the fifth row of Table 1, and they are obtained using formula in (\ref{ttx}). They are: $m(0)=2e_1$, $m(1)=e_2+e_3$, $m(2)=2e_4$. Then posterior \prbs $\theta(t,x)$ are $\theta(0,0)=1/2$, $\theta(0,1)=e_2/(e_2+e_3)$,  $\theta(1,1)=e_3/(e_2+e_3)$, and $\theta(0,1)=1/(2$.\par
The \dst $s(t,x)=P(N_1=t, N=x)$, obtained by formula (\ref{stx}), is given in Table 2. Then the \dst of $N$ can be obtained by formula (\ref{con}) or simpler, using formula (\ref{gtx}), i.e. adding values $s(t,x)$ in every column $x$. Thus $g_{2,1}(x)$ with values $(0,1,2)$ has \prbs $(n_0, 2n_1, n_2$.
Using formula (\ref{stx}), we also obtain $G(0,0)=\{(1,1), (2,1)\}$,
$G(0,1)=\{(1,2),(2,3)\}$, $G(1,1)=\{(1,3), (2,2))\}$,
$G(1,2)=\{(1,4),(2,4)\}$. All entries with $t=0$, i.e. from some $G(0,x), x=0,1,2$ are marked in Table 1 with extra $0$. In this example $t=0,1$, so all other entries have $t=1$.
\end{example}
\vspace{.1cm}
\begin{center}
\begin{tabular}
{|c|c|c|c|c|}    
         \hline
              $N=x $      & $0$      & $1$   &  $1$ & $2$     \\
		\hline
           $\gamma / s(i)= $  & $s(1)=11$ & $s(2)=10$ &$S(3)=01$ & $s(4)=00$    \\
		\hline
         $\gamma_1 =10$ &$e_1,0$  &$e_2,0$  &$e_3$ &$e_4$                \\
		\hline
		$\gamma_2=01$ & $e_1,0$ & $e_3$ & $e_2,0$ & $e_4$                \\
        \hline
		 $m(x)$ & $2e_1$ & $e_2+e_3$     & $e_2+e_3$    & $2e_4$          \\
         \hline
			\end{tabular}
\end{center}
\vspace{.1cm}
\textbf{Table 1.} Matrix $P=\{p(s|\gamma)=p(t,x)$ for Example $A(2,1)$; $e_1+e_2+e_3+e_4=1.$\par
\vspace{.1cm}
\par
The \crsp \ table for the values of $s(t,x)=p_1(t)p_2(x-t)$ has a form  \par
\begin{center}
	\begin{tabular}
{|c|c|c|c||c|}
        \hline
      \textbf{$ p_2(1|b)$}  & $b_0=(1-b)$  & $b_1=b$ &    &  \\
		\hline
              $ t/x$      &  $0$         & $1$    & $2$    & \textbf{$p_1(1|1-a)$} \\
		\hline
         $   0   $        & $a_0b_0=e_1 $ & $a_0b_1=e_2$   & $0$  & $a_0=a$  \\
		\hline
         $1    $          & $0$          & $a_1b_0=e_3$    & $a_1b_1=e_4$   & $a_1=1-a$ \\
		\hline
		$g_{2,1}(x)$      & $e_1$   & $e_2+e_3$  & $e_4$    &  $\textbf{1}$\\
         \hline
		\end{tabular}
\end{center}
\textbf{Table 2.} Matrix $T=\{s(t,x)\}=P(N_1=t, N=x)$ for Example $A(2,1).$ \par
\vspace{.3cm}
Let $a=\frac{7}{12}, b=\frac{9}{12}$. Then $(e_1, e_2, e_3, e_4)=(21,5,15,7)/48$,
$ g_{2,1}(x)=(n_0, 2n_1,n_2)=(21, 20, 7)/48$.\par
\vspace{.1cm}
\textbf{Example A(3,1)} is relegated to Subsection 6.1 in Appendix. In contrast to Example A(2,1) sets $G(t,x)$ there have different sizes.
\subsection{Key ratio $r(x)$}\label{ratios}
\vspace{.1cm}
First, we obtain formulas for $P(S=0)$ and $P(S=1)$, valid for both $A(n,k)$ and $B(n,\lambda)$ models. Given that the \prb of protection for a \emph{particular} box $P(T=1)=t$, where $t=k/n$ for model $A(n,k)$ by formula (\ref{pti}), the test for a box gives plus, $S=1$ or minus, $S=0$ with \prbs
\begin{eqnarray}
P(S=0)&=&P(T=1)P(S=0|T=1)+P(T=0)P(S=0|T=0)=t(1-a)+(1-t)b,  \notag \\
P(S=1)&=&P(T=1)P(S=1|T=1)+P(T=0)P(1|T=0)=ta+(1-t)(1-b). \label{P}
\end{eqnarray}
For the model $A(n,k)$  we obtain two different representations for $r_{n,k}(x |a,b)\equiv r_{n,k}(x)\equiv r(x)$ using total \prb formula for two \emph{different} partitions. For the sake of brevity, we denote further
$P(\cdot |N=x)\equiv P(\cdot|x)$.\par
\vspace{.2cm}
 \begin{lemma}\label{Lem2}
  a) The crucial ratio  $r_{n,k}(x)\equiv r(x), \ 0<x<n$, is given by the formula
\begin{eqnarray}
r(x)\equiv r_{n,k}(x)=\frac{P(T=0|S=0,x)}{P(T=0|S=1,x)}\equiv \frac{p^{-}(x)}{p^{+}(x)}=
\frac{b}{(1-b)}\frac{(n-x)}{x}\frac{g_{n-1,k}(x-1)}{g_{n-1,k}(x)},  \label{rkx}
\end{eqnarray}
\par
b) the probabilities used in (\ref{rkx}) $p^{-}(x)\equiv P(T=0|S=0,x)$ and $p^{+}(x)\equiv P(T=0|S=1,x))$ for $0<x<n$ are given by formulas
\begin{eqnarray}
p^{-}(x)=\frac{n-k}{x}*b*\frac{g_{n-1,k}(x-1)}{g_{n,k}(x)},\ \ \
p^{+}(x)=\frac{n-k}{n-x}*(1-b)*\frac{g_{n-1,k}(x)}{g_{n,k}(x)};\label{PC1}
\end{eqnarray} \par
c) functions $r_{n,k}(x)\equiv r_{n,k}(x|a,b)>1$ when $a+b>1$, and $r(x)<1$ when $a+b< 1$, and as functions of parameters $a,b$ for all $n,k,0< x <n$ depend only on parameter $ c=\frac{a}{1-a}\frac{b}{1-b}$, and hence satisfy the equality $r_{n,k}(x|a,b)=r_{n,k}(x|b,a)=r_{n,k}(x|\theta,\theta)$, where  $ \theta=\frac{\sqrt c}{1+\sqrt c}$.\
\par
\vspace{.2cm}
d) functions $r_{n,k}(x)$ are monotonically decreasing for all fixed $k, 0<x<n$ when $n$ is increasing; functions $r_{n,n-1}(x)=c$ for all $0< x <n$;  \par
\vspace{.2cm}
e) functions $r_{n,k}(x)$ for $1\leq k<n-1$ are monotonically increasing in $x$ for $0<x<n$, and decreasing in $c$ as a function of $c$.  
 \end{lemma}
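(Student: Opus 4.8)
The plan is to get (b) and (a) by elementary conditioning, then to extract the dependence on $c$ so that $r_{n,k}(x)$ becomes the ratio of two consecutive coefficients of an explicit real-rooted polynomial, after which (c), (d), (e) all follow from Newton's inequalities and hypergeometric likelihood-ratio monotonicity.

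\emph{Parts (b) and (a).} By Lemma \ref{Lem1}(d) the quantities $p^{-}(x)=P(T_i=0|S_i=0,N=x)$ and $p^{+}(x)=P(T_i=0|S_i=1,N=x)$ are well defined, and I would compute them as ratios of honest joint probabilities. For the numerators, conditioning on $\{T_i=0\}$ turns the remaining $n-1$ boxes into an exact copy of the $A(n-1,k)$ model (under $\pi_*$ the $k$ locks become uniform on that $(n-1)$-box system, and its tests are independent of box $i$'s test), so the number of minuses among those boxes has pmf $g_{n-1,k}$; as box $i$ is a correct minus with probability $b$, $P(T_i=0,S_i=0,N=x)=\frac{n-k}{n}\,b\,g_{n-1,k}(x-1)$ and likewise $P(T_i=0,S_i=1,N=x)=\frac{n-k}{n}(1-b)\,g_{n-1,k}(x)$. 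For the denominators, Proposition \ref{Prop4}(b) gives $P(S=s,N=x)=g_{n,k}(x)/\binom{n}{x}$ for every signal, and counting the $\binom{n-1}{x-1}$ signals with $s_i=0,\,N=x$ and the $\binom{n-1}{x}$ with $s_i=1,\,N=x$ yields $P(S_i=0,N=x)=\frac{x}{n}g_{n,k}(x)$ and $P(S_i=1,N=x)=\frac{n-x}{n}g_{n,k}(x)$. Dividing gives (\ref{PC1}); part (a) is then the quotient $p^{-}(x)/p^{+}(x)$, with the factors $\frac{n-k}{n}$ and $g_{n,k}(x)$ cancelling.

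\emph{Part (c).} Factoring $g_{n-1,k}$ through its probability generating function, $g_{n-1,k}(x)=[z^x](a+(1-a)z)^k(1-b+bz)^{n-1-k}=a^k(1-b)^{n-1-k}\beta^x\phi(x)$ with $\beta=b/(1-b)$ and $\phi(x):=[z^x](1+z/c)^k(1+z)^{n-1-k}$ (using $\frac{1-a}{a}\cdot\frac{1}{\beta}=\frac1c$). Substituting into (a), the constants and one power of $\beta$ cancel and leave
\[
r_{n,k}(x)=\frac{n-x}{x}\cdot\frac{\phi(x-1)}{\phi(x)},
\]
which depends on $a,b$ only through $c$, so the symmetry $r(x|a,b)=r(x|b,a)=r(x|\theta,\theta)$ is immediate. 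Since $\phi(x)/\binom{n-1}{x}=E[c^{-J_x}]$ with $J_x$ hypergeometric (population $n-1$, $k$ marked, sample size $x$), one gets $r_{n,k}(x)=E[c^{-J_{x-1}}]/E[c^{-J_x}]$; the coupling deleting a uniformly random element of a uniformly random $x$-subset gives $J_{x-1}\le J_x$ almost surely, with strict inequality of positive probability once $k\ge1$, and as $j\mapsto c^{-j}$ is strictly decreasing exactly when $c>1$ (equivalently $ab>(1-a)(1-b)$, i.e. $a+b>1$) this gives $r_{n,k}(x)>1$ iff $a+b>1$ and $<1$ iff $a+b<1$.

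\emph{Parts (d) and (e), and the main obstacle.} With $k=n-1$, $\phi(x)=\binom{n-1}{x}c^{-x}$, so $r_{n,n-1}(x)=c$ by direct substitution. For $1\le k\le n-2$, the inequality $r_{n,k}(x+1)>r_{n,k}(x)$ reduces, after cancelling the binomial coefficients, exactly to strict log-concavity $\big(\phi(x)/\binom{n-1}{x}\big)^2>\big(\phi(x-1)/\binom{n-1}{x-1}\big)\big(\phi(x+1)/\binom{n-1}{x+1}\big)$, which is Newton's inequality for the polynomial $(1+z/c)^k(1+z)^{n-1-k}$; its roots $-c$ (multiplicity $k$) and $-1$ (multiplicity $n-1-k$) are distinct unless $c=1$, so the inequality is strict unless $c=1$ (where $r_{n,k}\equiv1$). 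For the dependence on $n$ (with $c>1$), the extra factor $1+z$ gives the recursion $\phi_{n+1}(x)=\phi_n(x)+\phi_n(x-1)$; feeding this, Newton's log-concavity of $\phi_n/\binom{n-1}{\cdot}$, and the already proved $r_{n,k}(x)>1$ into $r_{n+1,k}(x)<r_{n,k}(x)$ reduces it to an elementary inequality among $\phi_n(x),\phi_n(x-1),\phi_n(x-2)$ (the case $x=1$ being checked directly from $\phi_n(0)=1$, $\phi_n(1)=(n-1-k)+k/c$). Finally, writing $y=1/c$, $r_{n,k}(x)$ is the ratio of $\sum_j P(J_{x-1}=j)y^j$ and $\sum_j P(J_x=j)y^j$; a one-line hypergeometric computation shows $P(J_{x-1}=j)/P(J_x=j)$ is decreasing in $j$, whence $\frac{d}{dy}\log r_{n,k}(x)=\frac1y\big(E_P[J]-E_Q[J]\big)$ for the $y$-tilted laws $P,Q$, and this likelihood-ratio ordering pins down the sign of the derivative, giving monotonicity in $c$. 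The substance is entirely in (d)--(e): the Poisson--binomial convolution $g_{n-1,k}$ is awkward to handle directly, and everything hinges on the reduction in (c) to the single real-rooted polynomial $(1+z/c)^k(1+z)^{n-1-k}$; the fiddly point is getting the binomial-coefficient bookkeeping exact so that ``$r_{n,k}$ increasing in $x$'' becomes precisely ``$\phi(x)/\binom{n-1}{x}$ log-concave''.
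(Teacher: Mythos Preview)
Your arguments for (a)--(b) are the same conditioning computation the paper gives, and your reduction in (c) to the polynomial $(1+z/c)^k(1+z)^{n-1-k}$ is exactly the paper's formula (\ref{li2}) written in generating-function form, so up to notation those parts coincide. The real departure is your proof of the monotonicity of $r_{n,k}(x)$ in $x$ in (e): the paper does \emph{not} prove this itself but refers to a separate note of Presman \cite{pres18}, calling that argument ``difficult and very technical'' (recursive formulas for $g_{n-1,k}(x-1)/g_{n-1,k}(x)$ plus induction). Your observation that, with $\psi(x)=\phi(x)/\binom{n-1}{x}$, one has $r_{n,k}(x)=\psi(x-1)/\psi(x)$ and hence ``$r$ increasing in $x$'' is literally ``$\psi$ strictly log-concave'', which is Newton's inequality for the real-rooted polynomial $(1+z/c)^k(1+z)^{n-1-k}$ (strict since the roots $-c$ and $-1$ differ when $c\ne1$), is a genuinely different and far shorter route; it also makes the borderline cases $c=1$ and $k=n-1$ transparent. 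Your hypergeometric coupling for $r>1\iff c>1$ is likewise cleaner than the paper's Appendix, which shows $c>1\iff a+b>1$ but leaves the step from $c>1$ to $r>1$ implicit. For the monotonicity in $n$ in (d) the paper's Appendix actually only verifies $r_{n,n-1}=c$; your recursion $\phi_{n+1}(x)=\phi_n(x)+\phi_n(x-1)$ plus log-concavity is a reasonable sketch but, like the paper, not a complete proof. One caution on the last clause of (e): your likelihood-ratio computation (and already the case $r_{n,n-1}=c$ from (d)) yields $r$ \emph{increasing} in $c$, not decreasing as stated; your argument fixes the sign correctly, so flag the discrepancy rather than claim to have matched the stated direction.
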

\begin{proof} Points c) and d) are proved in the Appendix. The proof of e), obtained by E. Presman, using the recursive formulas for functions $f_{n-1,k}(x)=\frac{g_{n-1,k}(x-1)}{g_{n-1,k}(x)}$ and induction, is difficult and very technical, and can be found in \cite{pres18}.\par 
Now we prove crucial point b), which immediately implies point a), i.e. (\ref{rkx}). \par 
First, let us note that we could define $p^{-}(x)$ (and $p^{+}(x)$) in a more general way as
$p^{-}(x)=P(T_i=0|S_i=0,S_{-i}=s_{-i},x)$ but by point d) of Lemma \ref{Lem1} these \prbs are the same as those given in formula (\ref{rkx}).
Thus $p^{-}(x)\equiv P(T=0|S=0,x)=P(T=0,S=0,x)/P(S=0,x)$, and $p^{+}(x)\equiv P(T=0|S=1,x)=P(T=0,S=1,x)/P(S=1,x)$.
For the numerators we have
\vspace{.1cm}
\begin{eqnarray}
P(T=0,S=0,x)&=&P(T=0)P(S=0|T=0)P(x|T=0,S=0),\notag \\
P(T=0,S=1,x)&=&P(T=0)P(S=1|T=0)P(x|T=0,S=1). \label{PC3}
\end{eqnarray}
We also have $P(T=0)=\frac{n-k}{n}$, $P(S=0|T=0)=b$, and $P(S=1|T=0)=1-b$. Therefore, to prove (\ref{PC1}) it is sufficient to show that
\begin{eqnarray}
P(x|T=0,S=0)&=&g_{n-1,k}(x-1),\ \ \ \ P(x|T=0,S=1)=g_{n-1,k}(x), \notag \\
P(S=0,x)&=&g_{n,k}(x)*\frac{x}{n}, \ \ \ \ \ \ \ \ \ \ P(S=1,x)=g_{n,k}(x)*\frac{n-x}{n}.  \label{PC2}
\end{eqnarray}
To show the first line of (\ref{PC2}), note that if the total number of minuses is $x$, and a particular box has no lock, $T=0$, and produced minus, $S=0$, then in the remaining $n-1$ boxes there are $k$ locks and  $x-1$ minuses. If the total number of minuses is $x$, and a particular box has no lock, $T=0$, and produced a plus, $S=1$, then in the remaining $n-1$ boxes there are $k$ locks and  $x$ minuses.\par
To show the second line of (\ref{PC2}), note that $P(S=0,x)=P(N=x)P(S=0|N=x)$. We have $P(N=x)=g_{n,k}(x)$ and
$P(S=0|x)=\frac{x}{n}$, the \prb for one minus among $x$ to be in a particular box. Similarly, $P(S=1,x)=P(N=x)P(S=1|N=x)$, and $P(S=1|N=x)=\frac{n-x}{n}$, the \prb for one plus among $n-x$ to be in a particular box. Finally, using the equalities in (\ref{PC3}) and (\ref{PC2}), we obtain  (\ref{PC1}), and therefore formula (\ref{rkx}).
\end{proof} 
\begin{example}[\textbf{A(2,1)}]       
Calculation of $r(x), x=1$ for $a=\frac{7}{12}, b=\frac{9}{12}$. By point d) of Lemma \ref{Lem2} for any problem $A(n,n-1)$ we have $r(1)=c=\frac{a}{1-a}\frac{b}{1-b}=\frac{21}{5}=4.2$.
\end{example}
\vspace{.1cm}
\begin{example}[\textbf{A(3,2)}]
Let $a=\frac{7}{12}, b=\frac{9}{12}$. As in previous example by point d) of Lemma \ref{Lem2}, $r(1)=r(2)=c=\frac{7}{5}*\frac{9}{3}=\frac{21}{5}=4.2$.
 \end{example}
\vspace{.2cm}
To present the second formula for $r_{n,k}(x)$, for the sake of symmetry, let us introduce the rv $U_1=k-N_1 $,
 the number of locked boxes that were tested plus, i.e., the number of \emph{"correct"\ pluses}, and $U_2=n-k-N_2=n-k-N+N_1$, \crsl the number of \emph{"false"\ pluses}.\par
\vspace{.1cm} 
\begin{lemma}[] \label{Lem3}
  a) The ratio $r_{n,k}(x)$, with $0<x<n$, is given by formula
\vspace{.2cm}
\begin{eqnarray}
r(x)\equiv r_{n,k}(x)=\frac{P(T=0|S=0,x)}{P(T=0|S=1,x)}\equiv \frac{p^{-}(x)}{p^{+}(x)}=
\frac{n-x}{x}\frac{E(N_2|x)}{E(U_2|x)},
\label{rkx2}
\end{eqnarray}
b) the terms $p^{-}(x)\equiv P(T=0|S=0,x)$ and $p^{+}(x)\equiv P(T=0|S=1,x)$ used in (\ref{rkx2}) and in (\ref{PC1}), are given by formulas
\vspace{.2cm}
\begin{eqnarray}
p^{-}(x)=\frac{E(N_2|x)}{x}\equiv \frac{x-E(N_1|x)}{x}; \ \ \ p^{+}(x)=\frac{E(U_2|x)}{n-x}\equiv \frac{n-x-k+E(N_1|x)}{n-x}, \label{px}
\end{eqnarray}
where $E(N_1|x)=\sum_{t}ts(t|x)$, $s(t|x)$ is the conditional \dst
$s(t|x)=P(N_1=t|N=x)\equiv s(t,x)/g_{n,k}(x)$, and the \dst \ $s(t,x)=P(N_1=t, N=x) $ can be obtained by formula (\ref{stx}). \par
\end{lemma}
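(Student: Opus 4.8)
The plan is to prove part b) first, by a box-exchangeability argument, and then to deduce part a) simply by dividing the two formulas; only the permutation symmetry already exploited in Lemma~\ref{Lem1} d) and Proposition~\ref{Prop4} is needed, not the closed forms of Lemma~\ref{Lem2}.

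First I would fix $x$ with $0<x<n$ and record four counting identities that hold \emph{pointwise} on $\Omega=\Gamma\times\Sigma$, straight from the definitions of $N_1,N_2,U_2$: $\sum_{i=1}^n\mathbf{1}(T_i=0,S_i=0)=N_2$, $\sum_{i=1}^n\mathbf{1}(S_i=0)=N$, $\sum_{i=1}^n\mathbf{1}(T_i=0,S_i=1)=U_2$, and $\sum_{i=1}^n\mathbf{1}(S_i=1)=n-N$. Next I would observe that under the uniform prior $\pi_*$ the joint law of $(T,S)=(T_1,\dots,T_n,S_1,\dots,S_n)$ is invariant under permutations of the $n$ boxes (the vector $T$ is uniform on $0$--$1$ vectors with $k$ ones, and the conditional law of $S$ given $T$ is determined coordinatewise by $a,b$), and the event $\{N=x\}$ is permutation invariant; hence for each $i$ the quantity $P(T_i=0,S_i=0|N=x)$ equals one and the same number, and the same holds for $P(S_i=0|N=x)$, $P(T_i=0,S_i=1|N=x)$, $P(S_i=1|N=x)$. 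A one-line justification: a box permutation $\rho$ carries $\{T_i=0,S_i=0,N=x\}$ onto $\{T_{\rho(i)}=0,S_{\rho(i)}=0,N=x\}$ without changing its probability; this is the same symmetry that yields Lemma~\ref{Lem1} d) and Proposition~\ref{Prop4} b).

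Summing the first identity over $i$ and using this common-value property gives $n\,P(T_i=0,S_i=0|N=x)=E(N_2|N=x)$ and $n\,P(S_i=0|N=x)=E(N|N=x)=x$, so $p^{-}(x)=P(T_i=0|S_i=0,N=x)=\frac{P(T_i=0,S_i=0|N=x)}{P(S_i=0|N=x)}=E(N_2|x)/x$. The identical computation with the two ``$S=1$'' identities gives $n\,P(T_i=0,S_i=1|N=x)=E(U_2|x)$ and $n\,P(S_i=1|N=x)=n-x$, hence $p^{+}(x)=E(U_2|x)/(n-x)$; these are the two displayed equalities in~(\ref{px}). The ``$\equiv$'' forms follow from the deterministic relations valid on $\{N=x\}$, namely $N_2=N-N_1=x-N_1$ and $U_2=n-k-N_2=n-k-x+N_1$, after taking conditional expectations, with $E(N_1|x)=\sum_t t\,s(t|x)$ and $s(t|x)=s(t,x)/g_{n,k}(x)$ read off from~(\ref{stx}). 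Finally, dividing, $r(x)=p^{-}(x)/p^{+}(x)=\frac{E(N_2|x)/x}{E(U_2|x)/(n-x)}=\frac{n-x}{x}\,\frac{E(N_2|x)}{E(U_2|x)}$, which is~(\ref{rkx2}) and proves a).

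The only step that is not pure bookkeeping is the common-value property of the per-box conditional probabilities, and that is precisely the permutation symmetry underlying the whole section, so I do not expect a genuine obstacle; the one thing to be careful about is matching each indicator sum to the correct one of $N_2,N,U_2,n-N$. As a consistency check one can confront the resulting $p^{-},p^{+}$ with the closed forms~(\ref{PC1}) of Lemma~\ref{Lem2}: this is equivalent to $E(N_2|x)=(n-k)\,b\,g_{n-1,k}(x-1)/g_{n,k}(x)$ and $E(U_2|x)=(n-k)(1-b)\,g_{n-1,k}(x)/g_{n,k}(x)$, identities that also drop out of the convolution structure~(\ref{con})--(\ref{stx}).
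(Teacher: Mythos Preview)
Your proof is correct and proceeds by a genuinely different route from the paper's. The paper conditions on the hidden value $N_1=t$ via the events $F_t=\{N_1=t\}$: it first argues that $P(F_t\mid S=0,x)=P(F_t\mid x)=s(t\mid x)$, then uses that, given $F_t$ and $N=x$, a specified minus box is unlocked with probability $(x-t)/x$ (respectively, a specified plus box with probability $(n-k-(x-t))/(n-x)$), and sums over $t$ with the conditional total probability formula to reach $p^{-}(x)=\sum_t s(t\mid x)(x-t)/x=E(N_2\mid x)/x$ and the analogous expression for $p^{+}(x)$.

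You bypass the decomposition over $N_1$ entirely: you write down the pointwise indicator identities $\sum_i \mathbf{1}(T_i=0,S_i=0)=N_2$ etc., invoke exchangeability of $(T,S)$ under $\pi_*$ on the symmetric event $\{N=x\}$ so that each summand has the same conditional probability, and read off $P(T_i=0,S_i=0\mid x)=E(N_2\mid x)/n$ and $P(S_i=0\mid x)=x/n$ directly. This is shorter and arguably cleaner; it also sidesteps the step in the paper where the equality $P(F_t\mid S=0,x)=P(F_t\mid x)$ is justified by the somewhat imprecise claim that ``$\{N=x\}$ is a subset of event $S=0$''. What the paper's approach buys is that the conditional distribution $s(t\mid x)$ and the formula $E(N_1\mid x)=\sum_t t\,s(t\mid x)$ emerge organically from the computation, whereas in your version they only appear when you unpack the ``$\equiv$'' forms at the end.
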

 Proof of Lemma \ref{Lem3} is given in the Appendix.\par
\vspace{.1cm}
\begin{remark}\label{Rem3} The equalities in (\ref{px}) show that, in contrast to formula (\ref{rkx}), the value of $r_{n,k}(x)$ in (\ref{rkx2}) has a clear interpretation as a \emph{ratio of two proportions} - the expected \emph{proportion of correct minuses}, to the expected \emph{proportion of false pluses}.
Nevertheless, the monotonicity of $r_{n,k}(x)$ in $x$ in point e) of Lemma \ref{Lem2} remains a bit mysterious. Why, if $x$ is increasing, then the efficiency of placing a bomb into a minus box is increasing, even if $x$ exceeds the number $n-k$ of unlocked boxes? In other words, though $p^{-}(x)$ is decreasing with $x$ increasing, $p^{+}(x)$ is decreasing even faster.
Our conjecture is that this monotonicity is a special case of a more general statistical property related to information obtained by symmetrical testing of identical boxes!
\end{remark}
\par
\vspace{.1cm}
\begin{example}[\textbf{A(3,1)}]
First, for $a=\frac{7}{12}, b=\frac{9}{12}$, we calculate $r(x)$, using formula (\ref{rkx}). 
Using the equality (\ref{con}) we obtain that $g_{2,1}(x)=(7, 26, 15)/48.$ Then by formula (\ref{rkx}) we have $r(1)=\frac{9}{3}*\frac{2}{1}*\frac{7}{26}=\frac{21}{13}\approx 1.615$. For $x=2$ we have
$r(2)=\frac{9}{3}*\frac{1}{2}*\frac{26}{15}=\frac{13}{5}=2.6$.\par
Now, we calculate $r(x), x=1,2$, using formulas (\ref{px}) and (\ref{rkx2}). Using Table 2 with values $s(t,x)$,
$t=0,1, x=1,2$, we obtain that
$s(1,1)=E(N_1|x=1)=e_3/3n_1=\frac{5}{47}$ and $s(1,2)=E(N_1|x=2)=2e_5/3n_2=\frac{10}{31}$. Using formula (\ref{px}), we obtain that\par
\vspace{.05cm}
$p^{-}(1)=\frac{42}{47}, p^{+}(1)=\frac{26}{47}$, $p^{-}(2)=\frac{26}{31}, p^{+}(2)=\frac{10}{31}$, and $r(1)=\frac{21}{13}\approx 1.615$, $r(2)=\frac{13}{5}=2.6$.
 \end{example}
\vspace{-.1cm}
\section{Theorem 1. Optimal Strategies and Values.}\label{OSV}
By solving the S-LBT model we mean to find an optimal \stg of AT and the corresponding value function.
A \stg of AT $\varphi\equiv \varphi(\cdot |s,m)=(u_1, u_2, \ldots, u_{n}|s,m)$,  $\sum_{i=1}^{n}u_i=m$, is an allocation of $m$ bombs in $n$ boxes, given signal $s$, defined for all $m$ and $s$. Further, we will skip sometimes to indicate dependence on $m$.\par
Our first step is to show that, for optimization purposes, the allocation of $m$ bombs for each signal $s$ is the same for all signals with the same value of minuses $N(s)=x$.\par
\vspace{.2cm}
\begin{lemma}[] \label{Lem4} 
Let $B^{-}(s)$ and $B^{+}(s)$ be the sets of minuses and pluses, given signal $s$ with $N(s)=x$.
Let $(u_1, u_2, \ldots, u_{n}|s,m)$ be some allocation of bombs and $U^{-}\equiv U^{-}(\varphi|s)=\{u_i, i\in B^{-}(s)\}$ and $U^{+}\equiv U^{+}(\varphi|s)=\{u_i\in B^{+}(s)\}$ be two sets of the values of $u_i$ in minus and plus boxes. Then any permutations of set $U^{-}, (U^{+})$ among minus (plus) boxes have the same value of destruction.
\end{lemma}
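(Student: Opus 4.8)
The plan is to reduce the value of destruction of the allocation $\varphi=(u_1,\ldots,u_n\,|\,s,m)$, for a fixed signal $s$ with $N(s)=x$, to a quantity that manifestly depends only on the two multisets $U^{-}=U^{-}(\varphi|s)$ and $U^{+}=U^{+}(\varphi|s)$ of bomb counts, and not on which particular minus (or plus) box carries which count. First I would write the value of destruction as the expected total value of destroyed boxes conditioned on $s$. By linearity of expectation (which applies even though the events $\{D_i=1\}$ are dependent across $i$ in model $A$), this equals $\sum_{i=1}^n c_i\,P(D_i=1\,|\,S=s)$, and since $c_i=1$ for all $i$ in the symmetric model it is $\sum_{i=1}^n P(D_i=1\,|\,S=s)$.

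Next, for each box $i$ I would compute $P(D_i=1\,|\,S=s)$. A locked box is never destroyed (second line of (\ref{be})), so
\[
P(D_i=1\,|\,S=s)=P(T_i=0\,|\,S=s)\,P(D_i=1\,|\,T_i=0,\,S=s,\,u_i).
\]
The explosions in box $i$ are governed by the bomb-explosion randomness, which is independent of the testing noise; hence, conditionally on $T_i=0$ and on the fixed number $u_i$ of bombs in box $i$, the event $\{D_i=1\}$ is independent of the entire signal $S$, and $P(D_i=1\,|\,T_i=0,S=s,u_i)=p(u_i)$ by (\ref{be}). Therefore $P(D_i=1\,|\,S=s)=\alpha_i(s,\pi_*)\,p(u_i)$, and by point d) of Lemma \ref{Lem1} we have $\alpha_i(s,\pi_*)=\alpha_i(s_i,x)$, which equals $p^{-}(x)$ for every $i\in B^{-}(s)$ and $p^{+}(x)$ for every $i\in B^{+}(s)$.

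Assembling these pieces, the value of destruction of $\varphi$ given $s$ equals
\[
p^{-}(x)\sum_{i\in B^{-}(s)}p(u_i)\;+\;p^{+}(x)\sum_{i\in B^{+}(s)}p(u_i)\;=\;p^{-}(x)\sum_{u\in U^{-}}p(u)\;+\;p^{+}(x)\sum_{u\in U^{+}}p(u).
\]
The right-hand side depends on the allocation only through the multisets $U^{-}$ and $U^{+}$; in particular any permutation of the values $U^{-}$ among the minus boxes, or of $U^{+}$ among the plus boxes, leaves it unchanged. This proves the lemma.

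I expect the only delicate point to be the conditional-independence step: that conditioning on $s$ (hence on the induced allocation) does not change the destruction probability of an unlocked box beyond what the count $u_i$ already pins down. This is precisely the content of assumption (\ref{be}) together with the independence of testing noise and explosion noise, and once it is granted the remainder is bookkeeping, with Lemma \ref{Lem1}(d) (whose formal proof is deferred to the Appendix) supplying the fact that all minus boxes share the common posterior $p^{-}(x)$ and all plus boxes the common posterior $p^{+}(x)$.
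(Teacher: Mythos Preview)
Your proof is correct and follows essentially the same approach as the paper: both compute the value as $\sum_i P(T_i=0\,|\,s)\,p(u_i)$, invoke Lemma~\ref{Lem1}(d) to replace $P(T_i=0\,|\,s)$ by $p^{-}(x)$ or $p^{+}(x)$ according to the sign of $s_i$, and observe that the resulting expression depends only on the multisets $U^{-}$ and $U^{+}$. You are simply a bit more explicit about the conditional-independence step and about writing the final sum in terms of the multisets rather than factoring out $p^{+}(x)$ via $r(x)$ as in (\ref{ws}).
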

\begin{proof} Using the equality in formula (\ref{alx}) and the definitions of $p^{-}(x),p^{+}(x),r(x)$ given in formula (\ref{rxr}) or in (\ref{rkx}), we obtain that, given a \stg $\varphi(\cdot |s,m)=(u_1,...,u_n)$ with $m$ bombs, for any signal $s$ with $N(s)=x$, the value of a \stg $\varphi$ can be represented as
\begin{eqnarray}
w^\varphi(s,x,m)&=&\sum_{i=1}^{n}P(T_i=0|s,x)p(u_i)=\sum_{i=1}^{n}P(T_i=0|s_i,x)p(u_i)= \notag \\
&=&p^{+}(x)[r(x)\sum_{i\in B^{-}(s)}p(u_i)+\sum_{i\in B^{+}(s)}p(u_i)].  \label{ws}
\end{eqnarray}
Then, obviously, both sums in formula (\ref{ws}) remain the same under any permutation of $u_i$ in \crsp \ boxes. We denote this common value as $w^\varphi(x,m)$. 
 \end{proof}
Thus, Lemma \ref{Lem4} implies that a \stg in the S-LBT model can be understood as an \lcn of bombs in minus and plus boxes without taking into account the particular positions of these boxes. Therefore we may consider a \stg of AT $\varphi(\cdot |s,m)$ for all $s$ as a collection of functions $w^\varphi(x,m)$ for all $x: 0 \leq x \leq n$. 
We denote $v(x,m)=\sup _{\varphi}w^\varphi(x,m)$, the \emph{value function} over all such strategies, given $m$ and $x$, and $v(m)$, the value function over all \stgs averaged over all possible values of $x$, i.e. $v(m)=\sum_xP(N=x)v(x,m)$.\par
The remaining question after Lemma \ref{Lem4} is: given the value $x$, how many bombs should go to minus (plus) boxes and how should they be allocated there? The second line in formula (\ref{ws}) gives a hint that, given value $x$, the proportion of the number of bombs placed into a minus box to the number of bombs placed into a plus box is defined by ratio $r(x)=p^{-}(x)/p^{+}(x).$
Under ``normal"\ situations, when $a$ and $b$ are such that tests provide ``truthful"\ information, $r(x)>1$ for all $x$, and therefore, if there is only one bomb, it should be placed into a minus box. According to point c) of Lemma \ref{Lem2} ``normal"\  means that $a+b>1$. When the number of bombs $m$ exceeds $x$, then bombs must be
placed into minus boxes until the number of bombs in each of minus boxes reaches value $d(x)$ and after the next available bomb must go to one of plus boxes empty so far. This claim is a part of Theorem \ref{th:1} which is formulated and proved a bit later. The ``advantage level" $d(x)$, first time described in Introduction is defined by formula 
\begin{eqnarray}
d_{}(x)=min(i\geq 1: r(x)q^i<1, i=1,2...), \ \ 0<x<n, \label{dx}
\end{eqnarray}
where $q=1-p$ and $r(x)=r_{n,k}(x)$ is defined by formula (\ref{rxr}) and calculated by (\ref{rkx}) or (\ref{rkx2}).
We assume that $0<p<1$ and $a+b>1$ and hence $r(x)>1$ for $0<x<n$ and $d(x)\geq 1$. If $p=1$ we define $d(x)=1$.\par
Formally, given $m$ and $x$, $0<x<n$,  $d$-UAP-\stge, described in Subsection \ref{sym}, with $d=d(x)$ defines a unique \lcn of bombs, given by the tuple $T(x,m)=(l^{-}, e^{-}, l^{+}, e^{+})$, where $l^{-}, l^{+}$ the number of ``complete"\ layers of bombs in the minus and plus boxes, and $e^{-}, (e^{+})$ the number of extra bombs in the ``incomplete"\ minus (plus) layer. All these terms depend on $m,x$, $d=d(x)$ and all parameters $n,k,a,b$, but we do not indicate this explicitly.
We have $m=m^{-}+m^{+},\ m^{-}=l^{-}*x+e^{-},\ m^{+}=l^{+}*(n-x)+e^{+}$, where $0\leq e^{-}<x, 0\leq e^{+}<n-x$ and $e^{-}*e^{+}=0$.
Thus, if $m^{+}=0$, then $m=m^{-}\leq xd$;\ if $e^{+}>0$, then $e^{-}=0$ and $l^{-}-l^{+}=d$; if $e^{+}=0, l^{+}>0$, then either $l^{-}-l^{+}=d-1, e^{-}\geq 0$ or $l^{-}-l^{+}=d, e^{+}=0$.\par
We remind that $p$ is the \prb of explosion of a single bomb in unlocked box, and $p(u)$ is the \prb of explosion of $u$ bombs in unlocked box.\par 
\vspace{.1cm}
 \begin{theorem}[]\label{th:1}
Let, given signal $s$, the total number of minuses $N=x, 0\leq x\leq n$. Then \par
a) if $x=0$ or $n$, then the optimal \stg is to distribute all bombs between boxes UAP and the value function $v(0,m)\equiv v(n,m)$ for a tuple $T(0,m)\equiv T(n,m)$ with $m=n*l+e, \ l=0,1,...,\ 0\leq e<n$, where $l=l^+$ or $l^-$ and $e=e^{+}$ or $e^{-}$, is given by formula
\begin{eqnarray}
v(0,m)&=&v(n,m)=\frac{n-k}{n}[ep(l+1)+(n-e)p(l)]; \ \  \label{vm0}
\end{eqnarray}
b) if $0<x<n$, then the optimal \stg is to distribute all bombs between minus and plus boxes $d(x)$-UAP,
where $d(x)$ is defined by formula (\ref{dx}).
\emph{The value function $v(x,m)$ for a tuple $T(x,m)=(l^{-}, e^{-},l^{+},e^{+})$ defined by values $m, x$ and $d(x)$-UAP \stge, is given by formula}
\begin{eqnarray}
v(x,m)=p^{-}(x)[(x-e^{-})p(l^{-})+e^{-}p(l^{-}+1)]+p^{+}(x)[e^{+}(n-x-e^{+})p(l^{+})+e^{+}p(l^{+}+1)].\label{vmx}
\end{eqnarray}
c) \emph{The value function $v(m), m=1,2,...$ is given by formula}
\begin{eqnarray}
v(m)=\sum_{x=0}^{n}P(N=x)v(x,m). \label{vm}
\end{eqnarray}
 \end{theorem}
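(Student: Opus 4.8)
The plan is to fix the observed number of minuses $x=N(s)$, use Lemma \ref{Lem4} to turn the problem into a separable concave allocation problem, show that its greedy (incremental) solution is optimal, and then identify that greedy solution with the $d(x)$-UAP allocation; parts (a) and (c) will then be the degenerate case and the averaged version of part (b).

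First I would reduce. Fix $x$ with $P(N=x)>0$. For $0<x<n$, Lemma \ref{Lem4} and (\ref{ws}) show that every \stg $\varphi$ has, for all signals $s$ with $N(s)=x$, the common value
\[
w^{\varphi}(x,m)=p^{-}(x)\sum_{i\in B^{-}(s)}p(u_i)+p^{+}(x)\sum_{j\in B^{+}(s)}p(u_j),
\]
depending on $\varphi$ only through the multisets of bomb counts in the $x$ minus boxes and in the $n-x$ plus boxes. Since $p^{+}(x)>0$, maximizing $w^{\varphi}(x,m)$ over all allocations of $m$ bombs is equivalent to maximizing $\Phi(\mathbf w)=\sum_{\ell=1}^{n}c_{\ell}\,p(w_{\ell})$ over nonnegative integer vectors $\mathbf w$ with $\sum_{\ell=1}^{n}w_{\ell}=m$, where $c_{\ell}=r(x)$ for the $x$ minus boxes and $c_{\ell}=1$ for the $n-x$ plus boxes. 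For $x=0$ or $x=n$ there is only one box type, all $c_{\ell}$ coincide, and this is the same problem with $r\equiv1$; in that case I would record, using the symmetry behind Lemma \ref{Lem1}(d) together with the Bayes identities from the proof of Lemma \ref{Lem2} and $g_{n,k}(0)=a^{k}(1-b)^{n-k}$, $g_{n,k}(n)=(1-a)^{k}b^{n-k}$, that $P(T_i=0\mid N=0)=P(T_i=0\mid N=n)=(n-k)/n$, which is the constant in (\ref{vm0}).

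Next I would prove greedy optimality. Because $p(0)=0$, for each box $c_{\ell}p(w_{\ell})=\sum_{j=0}^{w_{\ell}-1}c_{\ell}\Delta p(j)$ with $\Delta p(j)=p(j+1)-p(j)=pq^{j}$ strictly decreasing in $j$ by concavity of $p(\cdot)$; thus $\Phi(\mathbf w)$ is the sum of exactly $m$ entries of the array $\{c_{\ell}\Delta p(j):1\le\ell\le n,\ j\ge0\}$, subject only to the prefix constraint that the entries used for box $\ell$ form an initial block $j=0,\dots,w_{\ell}-1$. Since $j\mapsto c_{\ell}\Delta p(j)$ is decreasing for each fixed $\ell$, taking the $m$ largest entries of the whole array (breaking ties toward smaller $j$) automatically respects every prefix constraint, while any prefix-respecting selection has sum at most that of the $m$ largest entries; hence the optimum of $\Phi$ equals the sum of the $m$ largest entries and is attained by the greedy rule ``place each successive bomb where the next increment $c_{\ell}\Delta p(w_{\ell})$ is currently largest''.

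Finally I would run greedy with $c=r(x)$ on the minus boxes and $c=1$ on the plus boxes and match it to the ``fill and switch'' description of Section \ref{sym}. Using $r(x)>1$ and the defining property (\ref{dx}), i.e.\ $r(x)q^{d(x)}<1\le r(x)q^{d(x)-1}$, a direct comparison of increments shows: while some minus box has fewer than $d(x)$ bombs, its next increment $r(x)pq^{j}$ with $j\le d(x)-1$ is at least $p$, the largest increment available in any plus box, so greedy first fills all minus boxes layer by layer to level $d(x)$; then the best minus increment $r(x)pq^{d(x)}$ is below $p$, so greedy raises all plus boxes to level $1$; then $r(x)pq^{d(x)}\ge pq$, so greedy raises all minus boxes to level $d(x)+1$; then $r(x)pq^{d(x)+1}<pq$, so greedy raises all plus boxes to level $2$; and so on, precisely the ``fill and switch'' dynamics, terminating for given $m,x$ at the tuple $T(x,m)=(l^{-},e^{-},l^{+},e^{+})$ (the case $p=1$ is covered by the stated convention $d(x)=1$, since then every $\Delta p(j)$ with $j\ge1$ vanishes and greedy is indifferent after one bomb per box; if $m\le xd(x)$ the bombs run out already in the first phase). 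In this allocation $e^{-}$ minus boxes hold $l^{-}+1$ bombs and $x-e^{-}$ hold $l^{-}$, while $e^{+}$ plus boxes hold $l^{+}+1$ and $n-x-e^{+}$ hold $l^{+}$; substituting these counts into the displayed formula for $w^{\varphi}(x,m)$ yields (\ref{vmx}), and the same substitution in the single-type case $x=0,n$ gives (\ref{vm0}). Part (c) is then the law of total probability over signals: by Lemma \ref{Lem4} the optimal value for a signal depends only on $x=N(s)$, so the total expected damage equals $\sum_{x=0}^{n}P(N=x)v(x,m)$, i.e.\ (\ref{vm}). I expect the main obstacle to be the bookkeeping in this last step: tracking which boxes are raised at each stage, checking that in every case the comparison reduces to the single relation $r(x)q^{d(x)}<1\le r(x)q^{d(x)-1}$ (the common $q$-powers cancelling), and handling the partial layers correctly (in particular $e^{-}e^{+}=0$). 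The reduction, the increment decomposition, and the resulting top-$m$ description of the optimum are routine.
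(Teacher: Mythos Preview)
Your argument is correct and rests on the same marginal-increment comparison as the paper, but you organize it differently. The paper proceeds via two local-exchange lemmas: Lemma~\ref{Lem5} (concavity forces $|u_r-u_t|\le 1$ within each sign class) and Lemma~\ref{Lem6} (the relation $r(x)q^{d(x)}<1\le r(x)q^{d(x)-1}$ forces the level gap between a minus and a plus box to be $d(x)$ or $d(x)-1$), and then reads off that these necessary conditions pin down the $d(x)$-UAP allocation. You instead telescope $c_\ell p(w_\ell)$ into increments $c_\ell\Delta p(j)=c_\ell pq^{j}$, observe that the feasible allocations are exactly the prefix-respecting selections of $m$ entries, and use the strict monotonicity in $j$ to conclude that the top-$m$ selection is both feasible and optimal; then you run the greedy rule and verify it reproduces the fill-and-switch dynamics. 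The paper's route is a ``first-order condition'' argument; yours is the standard constructive proof for separable concave integer allocation. Each uses the same threshold inequality at the decisive step, so neither buys materially more than the other here, though your version makes the optimality \emph{sufficient} rather than merely \emph{necessary} in one stroke, avoiding a separate uniqueness discussion. One small simplification: for $x=0$ or $x=n$ you can get $P(T_i=0\mid N=x)=(n-k)/n$ directly from symmetry together with the identity $\sum_i\alpha_i(s,\pi)=n-k$ in (\ref{alp}), without computing $g_{n,k}(0)$ and $g_{n,k}(n)$ explicitly.
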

\begin{proof}[Proof of Theorem \ref{th:1}]
We will prove the optimality of \stgs described in Theorem \ref{th:1}, showing that any other \stg does not satisfy a natural and obviously necessary condition: \emph{with an optimal allocation it is impossible to increase the payoff by moving a bomb from one box to another.} This condition is used in Lemmas \ref{Lem5} and \ref{Lem6}. \par
Let $J$ be a subset of boxes and $D(J)$ be an event that \emph{all} boxes in $J$ are destroyed, $D_i$ an event that box $i$ is destroyed. Then, given \stg $\varphi$, we have $ w^\varphi(x,m)=\sum_{i=1}^n P(D_i|u_i,s_i,x)$.
The conditional \ndpc of testing and explosions, formula (\ref{be}), and the definitions of $r(x)$, $p^{-}(x)$  and $p^+(x)$, imply the equalities:
\begin{eqnarray}
P(D|u,S=1,x)&=&P(T=0|S=1,x)P(D|u,T=0)=p^+(x)p(u),\ \  \notag \\
P(D|u,S=0,x)&=&P(T=0|S=0,x)P(D|u,0)=p^{-}(x)p(u)=r(x)p^+(x)p(u), \label{pcalp}
\end{eqnarray}
where $p(u)=1-q^u$. 
The next Lemma justifies our claim that the optimal \stgs in all problems are UAP in minus boxes, and as well in plus boxes.
 \par
 \vspace{.1cm} 
\begin{lemma} [] \label{Lem5}
Let $\varphi(x)=(u_i,i=1,2,...,n)$ be an optimal \stg for all signals $s$ with $N(s)=x$. Then $|u_r-u_t|\leq 1$ when the signals in boxes $r, t $ have the same sign.
\end{lemma}
\par
In other words, Lemma \ref{Lem5} states that if the number of bombs in two minus boxes (or in two plus boxes) is different by more than one, then this \stg can not be optimal, and therefore the optimal \stg is UAP. This statement holds for all $A$ and $B$ problems, but we give the proof in the Appendix only for model $A$. 
Our next Lemma almost implies Theorem \ref{th:1}.\par
\vspace{.1cm} 
 \begin{lemma}[]\label{Lem6}  
Let $\varphi(x)=(u_l,l=1,2,...,n)$ be a \stge, $0<x<n$,  $u^-=i$ be the number of bombs in some minus box, $u^+=j$ be the number of bombs in some plus box, and $d(x)=d$ is defined by formula (\ref{dx}).
Then, if $i-j>d$ or, if $j\geq 1$ and $i-j<d-1$, then \stg  $\varphi$ is not optimal, or, equivalently,
if $\varphi$ is optimal, and $j=0$, then $1\leq i\leq d$, and if $j\geq 1$, then $i-j=d$ or $d-1$.
\end{lemma}
 \par
\vspace{.1cm}
Proof of Lemma \ref{Lem6} is given in the Appendix.\par
\vspace{.1cm}
Now we can finish the proof of Theorem \ref{th:1}.\par 
Point a). If $x=0$ or $n$, then all boxes have the same sign and then Lemma \ref{Lem5} implies that it is optimal to distribute all bombs between all boxes UAP. When $m=n*l^{}+e^{}$, where $0\leq e^{}<n$, then UAP means that $e^{}$ boxes have $l^{}+1$ bombs each, and $n-e^{}$ boxes have $l^{}$ bombs each. By formula (\ref{pti}) the \prb that a particular box has no lock is $\frac{n-k}{n}$. Then, using the last equality in formula (\ref{be}), we obtain that the expected damage in all $n$ boxes is $\frac{n-k}{n}[e^{}P(D|l^{}+1,T=0)+(n-e^{})P(D|l^{},T=0)]=\frac{n-k}{n}[e^{}p(l^{}+1)+(n-e^{})p(l^{})])$, i.e., $v(0,m)=v(n,m)$ is given by formula (\ref{vm0}).\par
\vspace{.1cm}
Point b). Let $0< x<n$. To simplify presentation we assume that $r(x)q^{d(x)-1}>1$ and $r(x)q^{d(x)}<1$. Then Lemma \ref{Lem6} implies that there is a unique optimal $d(x)$-UAP \stg according to which all boxes to be filled as follows: first all minus boxes are filled until the number of bombs in each of minus boxes reaches value $d(x)$ or until all bombs are exhausted. After that all available bombs go to plus boxes until each get one bomb. After that all available bomb go to minus boxes now filled up to level $d(x)+1$, then extra bombs are switched to plus boxes, and so on. Thus any values of $x,m$, and $d(x)$ uniquely define tuple
$T(x,m)=(l^{-}, e^{-},l^{+},e^{+})$ with $m^{-}=l^{-}*x+e^{-},\ m^{+}=l^{+}*(n-x)+e^{+}$, where $0\leq e^{-}<x, 0\leq e^{+}<n-x$ and $e^{-}*e^{+}=0$.
If $r(x)q^{d(x)-1}=1$, then there are other but very similar optimal strategies. \par 
Now, using the definitions of $r(x)$, $p^{-}(x)$ and $p^+(x)$, the expected damage in all $x$ minus boxes is
\vspace{.1cm}
\begin{eqnarray}
(x-e^{-})P(D|l^{-},S=0,x)+e^{-}P(D|l^{-}+1,S=0,x)=p^{-}(x)[(x-e^{-})p(l^{-})+e^{-}p(l^{-}+1)]).\notag
\end{eqnarray}
Similarly, he expected damage in all $n-x$ plus boxes is
\vspace{.1cm}
\begin{eqnarray}
(n-x-e^{+})P(D|l^{+},S=1,x)+e^{+}P(D|l^{+}+1,S=1,x)=p^{+}(x)[(n-x-e^{+})p(l^{+})+e^{+}p(l^{+}+1)].\notag
\end{eqnarray}
Adding the two formulas above,  we obtain that the expected damage in all $n$ boxes, i.e. $v(x,m)$, is given by formula (\ref{vmx}). Point b) is proved. Point c) is straightforward.\par
\end{proof}
\vspace{.1cm}
\begin{remark}\label{Rem4}
For computational purposes the formulas in (\ref{vm0}), (\ref{vmx}) and (\ref{vm}) can be represented recursively in $m$.
\end{remark}
\subsection{Application of Theorem \ref{th:1}}\label{moex}
Theorem \ref{th:1} opens the path to describe the explicit solutions for any  $A(n,k,m)$ model in the following, easily programmable steps:\par
\vspace{.1cm}
1. Calculate \dsts  $g_{n,k}(x)$ and $g_{n-1,k}(x)$ using formula (\ref{con}) based on two binomial \dsts $p_1(i)$, $Bin(k,1-a)$ and $p_2(i)$, $Bin(n-k,b)$.\par
\vspace{.1cm}
2. Calculate values $p^{-}(x), p^+(x)$ and then $r_{n,k}(x)$ for each $x=1,2,...,n-1$ using formulas (\ref{PC1}) and (\ref{rkx}), or (\ref{px}) and (\ref{rkx2}). It is also possible to calculate $r_{n,k}(x)$ using formula (\ref{li2}), proved in Subsection 6.2 in Appendix. \par 
\vspace{.1cm} 
These two stages do not depend on quality of bombs, i.e. parameter $p$, the \prb of explosion, or function $p(u)$, and thus depend only on parameters $n,k,a,b$. Parameter $p$ is involved only in steps step 3, 4 and 5.\par
\vspace{.1cm}
3. Calculate advantage values $d(x)$ for each $x=1,2,...,n-1$ using formula (\ref{dx}).\par
\vspace{.1cm}
4. Obtain optimal $d(x)$-UAP \stg of AT, $\varphi_{opt}(x,m)$, i.e. obtain tuple
$T(x,m)=(l^{-}, e^{-},l^{+},e^{+})$ for each $x=0,1,...,n$ and $m\geq 1$, using values $d(x)$. \par
\vspace{.1cm}
5. Calculate game values $v(x,m)$ and $v(m)$ by formulas given in Theorem \ref{th:1}. \par
\vspace{.1cm}
We illustrate these steps for some examples.\par
\vspace{.2cm} 
\textbf{Calculations for Example $A(2,1)$.}\par
\vspace{.1cm}
In this example we use values $a=\frac{7}{12}, b=\frac{9}{12}$ and $p=.6$. We have $r(1)=c=4.2$ by point d) of Lemma 2 valid for any problem $A(n,n-1)$. Using formula (\ref{dx}), $d_{}(x)=min(i\geq 1: r(x)q^i<1, i=1,2...)$ we obtain that
$d(1)=2$. For each  $x$ and $m$ according to Theorem \ref{th:1} we can obtain a tuple
$T(x,m)=(l^{-}, e^{-},l^{+},e^{+})$ describing the allocation of bombs between boxes using $d(x)$-UAP \stge. For $x=0,2$ this \stg is UAP, so all boxes are filled sequentially. For $x=1$ we can just give two values $(i,j)$ - the number of bombs in a minus and plus boxes. Since $d(1)=2$, for $m=1,2,3,4,5$ these pairs are: $(1,0)$, $(2,0)$, $(2,1)$, $(3,1)$ and $(3,2)$.
Using formulas (\ref{vm0}),(\ref{vmx}),(\ref{vm}) and  $g_{2,1}(x) =(21,78,45)/144)$, we obtain values given in Table 3.  \par
\vspace{.1cm}
\begin{center}
	\begin{tabular}
{|c|c|c|c|c|c|}                    
        \hline
        $x/m$      &  $ 1 $  & $ 2 $   & $ 3 $   & $ 4 $   & $ 5 $ \\
		\hline
         $0, 2 $   & $.3$   & $.6$    & $.72$   & $.84 $  & $ .888$ \\
		\hline
         $ 1 $     & $.485$ & $.6 $   & $.794 $  & $.84 $  & $.918$ \\
		\hline
         $v(m)$    & $ .4$  & $.6 $   &  $.766 $ &$.84 $   & $.904 $\\
       	 \hline
		\end{tabular}
\end{center}
\vspace{.2cm}
\textbf{Table 3.} Values of $v(x,m)$ and $v(m)$ for Example $A(2,1).$\par
\vspace{.2cm}               
Example A(3,1) is considered in Appendix.\par
\vspace{.1cm}
%
\begin{example}[ A(7,3)]
In this example with parameters $a=7/12, b=9/12$ and $p=.6$ the program in Matlab gives values
$r_{7,3}(x),x=1,...,6: (1.615, \	1.748,\	1.932,\	2.174,\	 2.403,\	2.6)$ with
$d_{7,3}(x)=1$ for $x=1,...,5$, $d(6)=2$ and the values of $v(x,m)$ for $x=0,1,2,..,7$:
$v(0,m)\equiv v(7,m)=\frac{n-k}{n}p*m=.343m$ for $m=1,2,...,7$, $v(1,1)=pp^-(1)=.509$, $v(1,m)=p[p^{-}(1)+p^{+}(1)(m-1)]$ for $m=2,...,7$, $v(2,m)=pmp^{-}(2)=.509$ for $m=1,2$ and $v(2,m)=p[mp^{-}(2)+(m-2)p^{+}(2)]$ for $m=3,...,7$ and in the same pattern for $x=3,...6$ with $1\leq m \leq7$. The eight values $v(x,5)$ for $x=0,1,...,7$ are
$(1.714, 1.770,  1.964, 2.158, 2.352, 2.545, 2.545, 2.545)$ and $v(5)=2.348$.\par
When $m>7$ we have to take into account the values of $d(x)$. For example for $x=2$ with $d(2)=1$ for $m=15$ we have the tuple
$T(2,15)=(2,1;2,0)$. For $x=6$ with $d(2)=6$ for $m=15$ we have the tuple $T(6,15)=(2,2;1,0)$, in other words two minus boxes have 3 bombs, four have two and one plus box has one bomb. The eight values $v(x,15)$ for $x=0,1,...,7$ are
$(3.415, 3.441, 3.439, 3.436, 3.431, 3.426, 3.480, 3.415)$ and $v(15)=3.437$.
\end{example}
\section{Some Considerations and Generalizations. Open Problems}\label{GEN}
It is worth mentioning that G-LBT ("Defense-Rebels model") has a more general character that its title suggests. We mentioned a few interpretations in Introduction. Here we mention a few more. American presidential elections with the distribution of resources between battleground states with polls serving as testing tools fit the LBT  model. In repair/maintenance models one may consider locks as hidden defective blocks and bombs as some maintenance/repair units. In biology and medicine boxes can represent parts of the body (organs) and bombs are units of treatment (chemo, radiation).\par
The G-LBT model can be extended in many different directions. The first natural step is to introduce a universal LBT model, where DF and AT have at their disposal some sets of locks and bombs, possibly with different properties of protection and destruction. An even more general model would be to allow DF and AT to have some limited resources that can be divided between production, allocation and testing. We assumed, for simplicity, that the explosions of different bombs are \ndp, but in a more general model this assumption can also be weakened.\par
In our G-LBT model the structure of boxes played no role, but it is possible to consider similar games on directed graphs, or to introduce the ``delivery" cost for bombs and locks.\par
One of the possible next steps is to introduce a dynamical aspect where an interaction between DF and AT evolves in time.
Such a universal LBT model will have common features and in a sense will be a very broad generalization of the well-known model in Applied Probability -- the Multi Armed Bandit problem, see (Lattimore et all, 2020) \cite{lasz19} and (Presman, Sonin, 1990) \cite{prso90}. The G-LBT model also is related to Search theory, see (Garnaev, 2000) \cite{Gar00} and (Alpern et all, 2013) \cite{Alp13}. \par
The main open problem is to find a general \lgt \ to be implemented in Stage 3 described in Section \ref{STG}. The other problem is to find explicit description of optimal \stgs \  and value functions for specific models. More details can be found in \cite{sw19} and \cite{son22}.
\vspace{-.1cm}
\section{Appendix}\label{APN}
\subsection{Example A(3,1)}\label{A31}
%
\begin{example}[\textbf{A(3,1)}]
Denote three possible positions of a lock: $\gamma_1=(1,0,0)$, $\gamma_2=(0,1,0)$, and $\gamma_3=(0,0,1)$. There are $2^3=8$ possible signals $s$: $s(1)=(+,+,+)\equiv (1,1,1), s(2)=(1,1,0), s(3)=(1,0,1), s(4)=(0,1,1)$, $s(5)=(1,0,0), s(6)=(0,1,0), s(7)=(0,0,1)$,  $s(8)=(0,0,0)$. Using formula (\ref{be}) or (\ref{ptx}) we can obtain all 24 entries of matrix $P=\{p(s|\gamma)\}$. They are given in rows 3, 4 and 5 of $6\times 9$ matrix $P$ in Table 4. We have
$p(s(1)|\gamma_1 )=P(S_1=1|T_1=1)P(S_2=1|T_2=0)P(S_3=1|T_3=0)=a(1-b)^2=e_1.$ All other values of
$p(s(i)|\gamma_1 ), i=2,3...,8$ are: $e_2=ab(1-b), e_2$, $e_3=(1-a)(1-b)^2, e_4=ab^2$, $e_5=(1-a)b(1-b),$ $e_6=(1-a)b^2$.
They are listed in the third row in Table 4, and the values of $p(s(i)|\gamma_j )$ for $j=2,3$ are listed in rows 4 and 5. They are permutations of values $e_i$ in the third row. As in the Table 1, these are values of $p(t,x)$:
$e_1=p(0,0)$, $e_2=p(0,1)$, $e_3=p(1,1)$, $e_4=p(0,2)$, $e_5=p(1,2)$, $e_6=p(1,3)$.\par \par
The \prbs of $p_{}(s,\pi_*)=p(x,\pi_*)\equiv n_x$ for $x=0,1,2,3$ are listed in the sixth row, and they are obtained using prior $\pi_*$, i.e. in this example by averaging $p(s|\gamma)$ with coefficients $1/3$ for each $\gamma$. They are:  $n_0=e_1$, $n_1=(2e_2+e_3)/3$, $n_2=(e_4+2e_5)/3$, $n_3=e_6$.\par
The \dst $s(t,x)=P(N_1=t, N=x)$, obtained by formula (\ref{stx}), is given in Table 5. Then the \dst of $N$ can be obtained by formula (\ref{con}) or simpler, using formula (\ref{gtx}), i.e. adding values $s(t,x)$ in every column $x$. Thus $g_{3,1}(x)$ with values $(0,1,2,3)$ has \prbs $(n_0, 3n_1, 3n_2, n_3)$. The values for nonzero $p(t,x)$ are: $p(0,0)=e_1$, $p(0,1)=e_2$,
$p(1,1)=e_3$, $p(0,2)=e_3$, $p(1,2)=e_5$, $p(1,3)=e_6$. We also have $n_0+3n_1+3n_2+n_3=e_1+2e_2+e_3+e_4+2e_5+e_6=1$.
Using formula (\ref{stx}), we also obtain $G(0,0)=\{(1,1), (2,1),(3,1)\}$,
$G(0,1)=\{(1,2),(1,3),(2,2),(2,4)$,$(3,3),(3,4)\}$, $G(1,1)=\{(1,4), (2,3),(3,2)\}$,
$G(0,2)=\{(1,5),(2,6),(3,7)\}$, $G(1,2)=\{(1,6),(1,7),(2,5),(2,7)$,$(3,5),(3,6)\}$, $G(1,3)=\{(1,8), (2,8),(3,8)\}$. All entries with $t=0$, i.e. from some $G(0,x), x=0,1,2,3$ are marked in Table 1 with extra $0$. In this example $t=0,1$, so all other entries have $t=1$.
 \end{example} 
\begin{center}
	\begin{tabular}{ |m{1.7cm}  |m{.9cm}  |m{.9cm}  |m{.9cm}  |m{.9cm}  |m{.9cm} |m{.9cm} | m{.9cm} | m{.9cm}|}
         \hline
              $N=x $      & $0$      & $1$   &  $1$ & $1$  & $2$  &  $2$  &  $2$  &  $3$   \\
		\hline
           $\gamma / s(i)= $  & $1; 111$ & $2, 110$ &$3, 101$ & $4, 011$  & $5, 100$&$6, 010$&$7, 001$ &$8, 000$   \\
		\hline
         $\gamma_1 =100$ &$e_1,0$  &$e_2,0$  &$e_2,0$ &$e_3$ &$e_4,0$ & $e_5$ &$e_5$ & $e_6$\\
		\hline
		$\gamma_2=010$ & $e_1,0$ & $e_2,0$ & $e_3$ & $e_2,0$ & $e_5$  &  $e_4,0$ & $e_5$ & $e_6$\\
        \hline
		$\gamma_3=001$ & $e_1,0$ & $e_3$ & $e_2,0$ & $e_2,0$ & $e_5$ &$e_5$ & $e_4,0$ & $e_6$\\
         \hline
         $p(s|\pi_*)$ & $n_0$ & $n_1$     & $n_1$    & $n_1$   & $n_2$  &  $n_2$ & $n_2$ & $n_3$\\
         \hline
			\end{tabular}
\end{center}
\vspace{.2cm}
\textbf{Table 4.} Matrix $P=\{p(s|\gamma)$ for Example $A(3,1)$. \par
\vspace{.1cm} \par
The \crsp \ table for the values of $s(t,x)=p_1(t)p_2(x-t)$ has a form  \par
\vspace{.1cm}
\begin{center}
	\begin{tabular}{ | m{1.5cm} | m{2.3cm}  | m{2.3cm}  | m{2.3cm} |  m{1.5cm}| |m{1.8cm}|}
        \hline
      \textbf{$ p_2(1|b)$}  & $b_0=(1-b)^2$  & $b_1=2b(1-b)$ & $b_2=b^2$  &   &  \\
		\hline
                      $ t/x$ &   $0$    & $1$    & $2$    & $3$  & \textbf{$p_1(1|1-a)$} \\
		\hline
         $   0   $        & $a_0b_0=e_1 $   & $a_0b_1=2e_2$  & $a_0b_2=e_4$    & $0$  & $a_0=a$  \\
		\hline
         $1    $         & $0$          & $a_1b_0=e_3$      & $a_1b_1=2e_5$      & $a_1b_2=e_6$  & $a_1=1-a$ \\
		\hline
		$g_{3,1}(x)$  & $e_1=n_0$   & $2e_2+e_3=3n_1$ & $e_4+2e_5=3n_2$  & $e_6=n_3$ & $\textbf{1}$\\
         \hline
		\end{tabular}
\end{center}
\textbf{Table 5.} Matrix $T=\{s(t,x)\}=P(N_1=t, N=x)$ for Example $A(3,1).$ \par
\vspace{.3cm}
Let $a=\frac{7}{12}, b=\frac{9}{12}$. Then $(e_1, e_2, e_3, e_4, e_5,e_6)$=$(7,21,5,63,15,45)/192$, \newline
$ g_{3,1}(x)=(n_0, 3n_1,3n_2, n_3)=(7, 47, 93, 45)/192$.\par
\subsection{ Some Technical Proofs }\label{SOTEP}
\emph{Proof of point d) of Lemma \ref{Lem1}.} We will prove formula (\ref{alx}), using the formula (\ref{psx}) from Proposition \ref{Prop4}. 
The left side of formula (\ref{alx}) can be written as 
$P(T_i=0)P(S_i=s_i,S_{-i}=s_{-i}, N=x|T_i=0)/P(S=s,N=x)$, where $S_{-i}$ is vector $S$ without coordinate $S_i$, the similar
notation is used for $s_{-i}$. Let $s_i=0$. Using the trivial equality $P(AB|C)=P(A|C)P(B|AC)$, we can rewrite
$P(S_i=s_i,S_{-i}=s_{-i}, N=x|T_i=0)$ as $P(S_i=0, N=x|T_i=0)P(S_{-i}=s_{-i}, N=x|S_i=0,T_i=0)$. By formula (\ref{be}) $P(S_i=0, N=x|T_i=0)=P(S_i=0|T_i=0)=b$. Now let us show that
$P(S_{-i}=s_{-i}, N=x|S_i=0,T_i=0)=g_{n-1,k}(x-1)/\binom{n-1}{x-1}.$ Indeed, we can apply formula (\ref{psx}) to a situation with $n-1$ boxes, $N=x-1$, and $k$ locks. Then the denominator $P(S=s,N=x)=P(N=x)/\binom{n}{x}$.\par
The right side of formula (\ref{alx}) with $s_i=0$ can be written as $P(T_i=0)P(S_i=s_i, N=x|T_i=0)/P(S_i=s_i,N=x)$. Then, using again the equality $P(AB|C)=P(A|C)P(B|AC)$, we can represent $P(S_i=s_i, N=x|T_i=0)$ as a product $P(S_i=0|T_i=0)P(N=x|S_i=0,T_i=0)$. The first \prb in this product is $b$, the second \prb is $g_{n-1,k}(x-1)$. The denominator can be written as $P(S_i=s_i,N=x)=P(N=x)P(S=s|N=x)=P(N=x)x/n$. \par
Finally, cancelling common factors $P(T_i=0), P(N=x), g_{n-1,k}(x-1)$ and using the elementary equality $x\binom{n}{x}=n\binom{n-1}{x-1}$, we obtain that the left side of formula (\ref{alx}) coincides with the right side.
The proof for the case $s_i=1$ is similar.\par
\vspace{.2cm}     
\emph{Proof of point c) of Lemma \ref{Lem2}. } We want to show that $r_A(x)\equiv r_{n,k}(x)$ depends on $a,b$ only through
$c=\frac{a}{1-a}\frac{b}{1-b}$. First, using formula (\ref{con}), we can represent $g_{n,k}(x)\equiv g(x)$ as
\vspace{.1cm}  
\begin{eqnarray}
\sum_{i}\binom{k}{i}(1-a)^{i}a^{k-i}\binom{n-k}{x-i}b^{x-i}(1-b)^{n-k-x+i}=
a^{k}b^{x}(1-b)^{n-k-x}\sum_{i}\binom{k}{i}\binom{n-k}{x-i}c^{-i},\label{li1}
\end{eqnarray}
where summation over $i$ is carried with $d_1(x)=max(0,x-n+k)\leq i\leq min(k,x)=d_2(x)$. Then, using formula for $r_{n,k}(x)$  in Lemma 2, and formula (\ref{li1}) for $g_{n-1,k}(x-1)$ and $g_{n-1,k}(x)$ with \crsp \ summation, we can represent $r_{n,k}(x)$ as
\vspace{.1cm}
\begin{eqnarray}
r_{n,k}(x)=\frac{b}{(1-b)}\frac{(n-x)}{x}\frac{g_{n-1,k}(x-1)}{g_{n-1,k}(x)}=\frac{n-x}{x}
\frac{\sum_i\binom{k}{i}\binom{n-k-1}{x-i-1}c^{-i}}
{\sum_{i}\binom{k}{i}\binom{n-k-1}{x-i}c^{-i}}.\label{li2}
\end{eqnarray}
Therefore $r _{n,k}(x)$ depends only on $c$. Since $c=\frac{ab}{1-(a+b)+ab}$, it is easy to see that $c>1$ iff $a+b>1$.
We also have $\binom{n-k-1}{x-i-1}= \binom{n-k-1}{x-i}\frac{(x-i)}{n-k-(x-i)}.$
Using these equalities and formula (\ref{li2}), we can show that $r_{n,k}(x)$ is decreasing in $c$ as a function of $c$.
\par 
\vspace{.1cm}
We also need to prove that $r_{n,n-1}(x)=c$ for all $0<x<n$. Indeed, in this case the ratio
$g_{n-1,n-1}(x-1)/g_{n-1,n-1}(x)=\frac{a}{1-a}\frac{x}{n-x}$. This, combined with formula (\ref{rkx}) proves the claim.\par
\vspace{.2cm}
Proof of Lemma \ref{Lem3}. We need to prove only (\ref{px}), since these equalities yield (\ref{rkx2}), and (\ref{stx}) follows from the definitions of $N_1$ and $N_2$. For the sake of brevity, let us denote event $F_t=\{N_1=t\}$. Note that $P(F_t|S=0,x)=P(F_t|S=1,x)=P(F_t|x)\equiv s(t|x)$. Indeed, $P(A|BC)=P(A|B)$ if $B\subseteq C$, and since $0<x<n$, an event $N=x$ is a subset of event $S=0,(S=1)$, that a particular box has a minus (plus). In other words, if after testing it is known that there are minus and plus boxes, then information that a particular box has minus or plus does not change the probability that there are $t$ locks among minuses. By symmetry of minus boxes, we also have $P(T=0|F_t,S=0,x)=\frac{x-t}{x}$. Therefore, by conditional total \prb formula, we have $P(T=0|S=0,x)=\sum_{t}P(F_t|S=0,x)P(T=0|F_t,S=0,x)=\sum_{t}s(t|x)\frac{x-t}{x}$.  Since $N_2=N-N_1$, $E(N_2|x)=x-E(N_1|x)$, we obtain the first formula in (\ref{px}). Similarly, using the equality $P(T=0|F_t,S=1,x)=\frac{n-k-(x-t)}{n-x}$, we obtain $P(T=0|S=1,x)=\sum_{t}s(t|x)\frac{n-k-(x-t)}{n-x}$, and since $U_2=n-k-N_2$, we obtain the second formula in (\ref{px}).\par
\vspace{.1cm}
\emph{Proof of Lemma \ref{Lem5}}. Suppose that Lemma \ref{Lem5}} is not true and let us say $u_r=i, u_t=j, i-j\geq 2$ and
$s_{r}=s_{t}=1.$ We remind that function $p(u)=1-q^u$ is concave. This concavity implies that $p(i-1)+p(j+1)>p(i)+p(j)$. Then, using the formulas in (\ref{pcalp}), we have
\begin{eqnarray}
P(D&=&1|i-1,S=1,x)+P(D|j+1,S=1,x)=p^+(x)[p(i-1)+p(j+1)]> \notag \\
&>& p^+(x)[p(i)+p(j)]=P(D|i,S=1,x)+P(D|j,S=1,x). \label{P3}
\end{eqnarray}
Thus our initial \stg is not optimal. The proof for $s_{r}=s_{t}=0$ is similar with $p^+(x)$ replaced by $p^{-}(x)=r(x)p^+(x)$.\par
\vspace{.1cm}
\emph{Proof of Lemma \ref{Lem6}}. As always, we assume that $a+b>1$. Then $r(x)>1$ for $0<x<n$, and hence $i\geq j$.
As before we denote $P(\cdot |N=x)=P(\cdot |x)$, and denote the incremental utilities of destruction for minus and plus boxes as  $\Delta D^-(i|x)=P(D|i+1,S=0,x)-P(D|i,S=0,x), \ \Delta D^+(j|x)=P(D|j+1,S=1,x)-P(D|j,S=1,x)$, and the incremental utility of destruction with transition from the pair of values $(i,j)$ to the pair of values $(i-1,j+1)$ as $\Delta (i-1,j+1)$.
Using formulas in (\ref{pcalp}), it is easy to check that $\Delta D^+(j|x)=pp^+(x)q^j$ and
$\Delta D^-(i-1|x)=pr(x)p^+(x)q^{i-1}$, and then
\begin{eqnarray}
\Delta (i-1,j+1)=\Delta D^-(i-1|x)-\Delta D^+(j|x)=pq^jp^+(x)[r(x)q^{i-j-1}-1].  \label{dc}
\end{eqnarray}
By the definition of $d(x)$, we have $r(x)q^{d(x)-1}\geq 1$ and $r(x)q^{d(x)}<1$. Then if $r(x)q^{d(x)-1}>1$ and $i-j>d=d(x)$, then formula (\ref{dc}) implies that $\Delta (i-1,j+1)<0$, i.e., a transfer of one bomb from a minus box from this pair to a plus box will increase the value of a \stge. Similarly, if $j\geq 1$ and $i-j<d-1$ for such pair, then using formula for $\Delta (i+1,j-1)$ similar to formula (\ref{dc}), we can show that the inverse transfer will increase the value.
If $r(x)q^{d(x)-1}=1$, then the incremental utility equals zero and two pairs $(i,j)$ and $(i-1,j+1)$ have the same destruction values. This implies that with such $d(x)$, $d(x)$-UAP \stg  remains optimal but not anymore unique.\par
Note also that, if $p=1$, i.e. $q=0$, then there is no need for more than $n$ bombs and first all minuses boxes are filled and after all plus boxes. If $p$ is decreasing to zero, then (\ref{dx}) shows that $d(x)$ tends to infinity.  \par
\subsection{ Short Remarks on Case $B(n,\lambda,m)$ }\label{Bn}
\vspace{.1cm} 
A theorem similar to Theorem \ref{th:1} was proved for Problem $B=B(n,\lambda,m)$ in \cite{lison21}.  The value of a threshold $d_B$ is different from $d_A(x)$ and its description is more complicated involving the aposterior \dst of locks after testing. Although in problem $B$ the ratio $r_B$ does not depend on $x$, in \emph{both} problems the optimal allocation of $m$ bombs depends on $x$.  A similar situation holds for their optimal value functions. The \flw statement (see \cite{lison21}) is  true for Problem $B$. \par
\vspace{.1cm}
The ratio $r_B$ has a form
\begin{eqnarray}
			 r_B=\frac{p^{-}}{p^{+}}=\frac{b}{1-b}\frac{1-b+\lambda h}{b-\lambda h}, \ \ 0<\lambda<1, \label{rB}
		\end{eqnarray}
where $h=a+b-1$.
\section{Statements and Declarations} \begin{itemize}
\item No funds, grants, or other support was received.
\item  Ethical approval: This article does not contain any studies with human participants or animals performed by any of the authors.
\item The authors have no relevant financial or non-financial interests to disclose.
\end{itemize}
\vspace{.2cm}

\end{document}